\newtheorem{theorem}{Theorem}
\newtheorem{lemmar}[theorem]{Lemma}
\newtheorem{definitionr}[theorem]{Definition}
\newtheorem{conjecturer}[theorem]{Conjecture}
\newtheorem{propositionr}[theorem]{Proposition}
\newtheorem{exampler}[theorem]{Example}
\newcommand{\rb}[1]{\left( #1 \right)}
\newcommand{\ceil}[1]{\left\lceil #1 \right\rceil}
\begin{document}
\title{Palindromic Subsequences in Finite Words}
%
%\titlerunning{Abbreviated paper title}
% If the paper title is too long for the running head, you can set
% an abbreviated paper title here
%
\author{Clemens M\"{u}llner$^{1,}$\thanks{This research was suported by the European Research Council (ERC) under the European Union’s Horizon 2020 research and innovation programme under the Grant Agreement No 648132.} \and Andrew Ryzhikov$^2$}

\date{CNRS,  Universit\'e Claude Bernard - Lyon 1, France \textit{mullner@math.univ-lyon1.fr} \\ LIGM, Universit\'e Paris-Est, Marne-la-Vall\'ee, France \textit{ryzhikov.andrew@gmail.com}}%\orcidID{1111-2222-3333-4444}}
%
% First names are abbreviated in the running head.
% If there are more than two authors, 'et al.' is used.
%
\maketitle              % typeset the header of the contribution
\begin{abstract}
In 1999 Lyngs{\o} and Pedersen proposed a conjecture stating that every binary circular word of length $n$ with equal number of zeros and ones has an antipalindromic linear subsequence of length at least~$\frac{2}{3}n$. No progress over a trivial $\frac{1}{2}n$ bound has been achieved since then. We suggest a palindromic counterpart to this conjecture and provide a non-trivial infinite series of circular words which prove the upper bound of~$\frac{2}{3}n$ for both conjectures at the same time.
The construction also works for words over an alphabet of size $k$ and gives rise to a generalization of the conjecture by Lyngs{\o} and Pedersen. Moreover, we discuss some possible strengthenings and weakenings of the named conjectures.
We also propose two similar conjectures for linear words and provide some evidences for them.

\textbf{Keywords:} palindrome, antipalindrome, circular word, subsequence
%\keywords{palindrome \and antipalindrome \and circular words \and subsequences}
\end{abstract}
\section{Introduction}

Investigation of subsequences in words is an important part of string algorithms and combinatorics, with applications to string processing, bioinformatics, error-correcting codes. A lot of research has been done in algorithms and complexity of finding longest common subsequences \cite{Bringmann2015,Abboud2015}, their expected length in random words \cite{Paterson1994}, codes with bounded lengths of pairwise longest common subsequences~\cite{Sloane2008}, etc. An important type of subsequences is a longest palindromic subsequence, which is in fact a longest common subsequence of a word and its reversal. Despite a lot of research in algorithms and statistics of longest common subsequences, the combinatorics of palindromic subsequences is not very well understood. We mention \cite{Bukh2014,Bukh2016,Axenovich2013,Holub2009} as some results in this direction. In this note we recall some known conjectures on this topic and provide a number of new ones.

The main topic of this note are finite words. A {\em linear word} (or just a {\em word}) is a finite sequence of symbols over some alphabet. A {\em subsequence} of a linear word $w = a_1 \ldots a_n$ is a word $w' = a_{i_1} \ldots a_{i_m}$ with $i_1 < \ldots < i_m$. A {\em circular word} is an equivalence class of linear words under rotations. Informally, a circular word is a linear word written on a circle, without any marked beginning or ending. A linear word is a {\em subsequence} of a circular word if it is a subsequence of some linear word from the corresponding equivalence class (such linear word is called a {\em linear representation}).

A word $w = a_1 \ldots a_n$ is a {\em palindrome} if $a_i = a_{n - i + 1}$ for every $1 \le i \le \frac{n}{2}$. A word is called {\em binary} if its alphabet is of size two (in this case we usually assume that the alphabet is $\{0, 1\}$). A binary word $w = a_1 \ldots a_n$ is an {\em antipalindrome} if $a_i \not = a_{n - i + 1}$ for every $1 \le i \le \frac{n}{2}$. The {\em reversal} $w^R$ of a word $w = a_1 \ldots a_n$ is the word $a_n \ldots a_1$.

In 1999 Lyngs{\o} and Pedersen formulated the following conjecture motivated by analysis of an approximation algorithm for a 2D protein folding problem \cite{Lyngso1999}.

\begin{conjecturer}[Lyngs{\o} and Pedersen, 1999]\label{conj-weak-anti}
	Every binary circular word of length $n$ divisible by $6$ with equal number of zeros and ones has an antipalindromic subsequence of length at least $\frac{2}{3}n$.
\end{conjecturer}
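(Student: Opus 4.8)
Since Conjecture~\ref{conj-weak-anti} is open and, as the authors note, no bound better than $\tfrac12 n$ is known, what follows is a plan of attack together with an honest account of where the real obstruction lies. I would first recast the problem in terms of common subsequences on the circle. Write $\operatorname{LAPS}(W)$ for the length of a longest antipalindromic linear subsequence of \emph{even} length of a circular word $W$ (this differs from the unrestricted quantity by at most $1$, so it suffices for the conjecture). Every even antipalindrome has the form $c\,\overline{c^{R}}$, so a length-$2k$ antipalindromic subsequence of $W$ is the same datum as a choice of two points of the circle, splitting it into complementary arcs $A$ and $B$, together with a word $c$ of length $k$ that is a subsequence of $A$ and such that $\overline{c^{R}}$ is a subsequence of $B$, both arcs read in the same cyclic direction. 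Since ``$\overline{c^{R}}$ is a subsequence of $B$'' is equivalent to ``$c$ is a subsequence of $\overline{B^{R}}$'', this simply says that $c$ is a common subsequence of $A$ and $\overline{B^{R}}$, whence
\[
  \operatorname{LAPS}(W)\;=\;2\,\max_{A}\,\operatorname{LCS}\!\bigl(A,\,\overline{B^{R}}\bigr),
\]
the maximum ranging over all splits of the circle into two arcs $A$ and $B$. Conjecture~\ref{conj-weak-anti} becomes the assertion that this maximum is at least $\tfrac13 n$ whenever $W$ is balanced and $6\mid n$.

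In this language the known $\tfrac12 n$ bound is immediate. Cut the circle into arcs $A,B$ of length $n/2$ and let $x$ be the number of zeros in $A$; balance forces $B$ to carry $x$ ones, so $\overline{B^{R}}$ carries $x$ zeros, while both $A$ and $\overline{B^{R}}$ carry $n/2-x$ ones. Taking the longer of the two monochromatic common subsequences yields $\operatorname{LCS}(A,\overline{B^{R}}) \ge \max(x,\,n/2-x) \ge n/4$, hence $\operatorname{LAPS}(W) \ge n/2$. The whole problem is to gain the remaining $\tfrac16 n$.

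Toward $\tfrac23 n$ I see two mechanisms worth developing. The first is to choose the cut \emph{adaptively}: if some length-$(n/2)$ arc contains a number of zeros outside $[\tfrac16 n,\,\tfrac13 n]$, the previous paragraph already gives $\operatorname{LAPS}(W)\ge\tfrac23 n$, so one may assume every length-$(n/2)$ arc is nearly balanced, and the plan is to show that such strong local balance forces $W$ to be so regular — roughly, to look locally like a repetition — that a long antipalindrome can be produced by hand. The second is to exploit common subsequences of $A$ and $\overline{B^{R}}$ that are \emph{not} monochromatic: for instance the word $0^{k}1^{k}0^{k}1^{k}$ of length $n=4k$ has no monochromatic common subsequence longer than $n/4$ for any cut, yet it is itself an antipalindrome, because for the split into its two halves $A,B$ one has $\overline{B^{R}}=A$; here the plan is to analyse the maximal runs of $W$ and prove, by a discharging argument paying for nearly balanced cuts out of the surplus at unbalanced ones, that some cut always yields a common subsequence of several runs of length at least $\tfrac13 n$. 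The principal obstacle, and the reason nothing has improved on $\tfrac12 n$ since 1999, is that both mechanisms stall at exactly $\tfrac12 n$ when pushed naively: averaging the base-camp bound over all $n$ diameters gives only $\tfrac12 n$, so one must genuinely strengthen the per-cut estimate; and the construction presented in this paper exhibits balanced words on which $\tfrac23 n$ is attained with equality, so any valid proof must be tight and must use the global equality between the numbers of zeros and ones in an essential, non-averaging way. Showing that one cannot simultaneously keep every cut nearly balanced, keep $W$ globally balanced, and forbid long non-monochromatic common subsequences of $A$ and $\overline{B^{R}}$ is exactly the missing $\tfrac16 n$, and is where I expect essentially all the difficulty to be.
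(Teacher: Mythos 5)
Conjecture~\ref{conj-weak-anti} is an open problem: the paper offers no proof of it, and explicitly states that nothing beyond the trivial $\tfrac12 n$ bound is known. Your decision not to fabricate a proof is therefore the correct one, and there is no ``paper proof'' to compare against. What you do assert checks out: the identity $\operatorname{LAPS}(W)=2\max_{A}\operatorname{LCS}(A,\overline{B^{R}})$ is a valid reformulation (an even antipalindrome is exactly $c\,\overline{c^{R}}$, and placing one cut between its two halves and the other at the endpoint of the chosen linear representation yields the arcs $A,B$), and your derivation of the $\tfrac12 n$ bound from a balanced cut is essentially the paper's Proposition~\ref{prop-anti}, which proves the $\tfrac14|w|$ version of Conjecture~\ref{conj-strong-anti}. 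Your remark that the extremal examples force any proof to be tight is also consistent with the paper, which records $0^i1^i(01)^i1^i0^i$ (and, via Theorem~\ref{thm-main-lb} with $k=2$, an asymptotic family) as witnesses that $\tfrac23 n$ cannot be improved. The two ``mechanisms'' you sketch remain speculative and do not close the gap between $\tfrac12 n$ and $\tfrac23 n$, but you say so yourself; as a blind assessment of the state of this statement, the proposal is accurate.
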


To the best of our knowledge, no progress has been achieved in proving this conjecture, even though it has drawn substantial attention from the combinatorics of words community. However, it is a source of other interesting conjectures.

In the mentioned conjecture, the position of a longest antipalindromic subsequence on the circle is arbitrary. A strengthening is to require the two halves of the subsequence to lie on different halves of the circle according to some partition of the circle into two parts of equal length. Surprisingly, experiments show that this does not change the bound.

\begin{conjecturer}[Brevier, Preissmann and Seb\H{o}, \cite{Brevier}] \label{conj-strong-anti}
	Let $w$ be a binary circular word of length $n$ divisible by $6$ with equal number of zeros and ones. Then $w$ can be partitioned into two linear words $w_1, w_2$ of equal length, $w = w_1w_2$, having subsequences $s_1, s_2$ such that $s_1s_2$ is an antipalindrome and $|s_1| = |s_2| = \frac{1}{3}|w|$.
\end{conjecturer}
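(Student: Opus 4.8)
Since Conjecture~\ref{conj-strong-anti} is still open, I outline the line of attack I would attempt rather than a finished proof. The first move is to recast the statement as a claim about longest common subsequences. Write the circle as $w=w_1w_2$ for one of the $n/2$ ways of splitting it into two arcs of length $\frac n2$, and write $\overline u$ for the bitwise complement of a binary word $u$ and $u^R$ for its reversal. A concatenation $s_1s_2$ with $s_1\le w_1$ and $s_2\le w_2$ is an antipalindrome of length $\frac 23 n$ with $|s_1|=|s_2|=\frac 13 n$ exactly when $|s_1|=\frac n3$ and $s_2=\overline{s_1}^{\,R}$; and $\overline{s_1}^{\,R}\le w_2$ is equivalent, after reversing and complementing, to $s_1\le\overline{w_2}^{\,R}$. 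Thus Conjecture~\ref{conj-strong-anti} is equivalent to the assertion that every balanced binary circular word $w$ of length $n$ admits a split $w=w_1w_2$ into halves with
\[
\mathrm{LCS}\bigl(w_1,\ \overline{w_2}^{\,R}\bigr)\ \ge\ \tfrac n3 .
\]
Here $w_1$ and $\overline{w_2}^{\,R}$ have the same length $\frac n2$ and, because $w$ is balanced, the same number of zeros and the same number of ones; so we are asking for a common subsequence of $\frac 23$ of the full length, for a well-chosen pairing of the two mirror semicircles.

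The second step is to reprove the known $\frac 12 n$ bound in this language as a sanity check. Reading $0\mapsto +1$ and $1\mapsto -1$ turns $w$ into a closed $\pm1$-walk; by the cycle lemma there is a cut point at which every partial sum is nonnegative. For that cut the partial sum after $w_1$ is $\ge 0$, so $w_1$ has at least $\frac n4$ zeros, hence so does $\overline{w_2}^{\,R}$ (equal numbers of zeros, as noted); therefore $0^{n/4}$ is a common subsequence and $\mathrm{LCS}\ge\frac n4$, an antipalindrome of length $\frac 12 n$ respecting the diameter. No single cut can do better in general, since two binary strings of length $\ell$ with the same numbers of $0$s and $1$s can have LCS only $\lceil \ell/2\rceil$, e.g.\ $0^a1^b$ versus $1^b0^a$. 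The whole difficulty is to improve the constant from $\frac 14$ to $\frac 13$ by exploiting the freedom of the cut together with the global balance of $w$.

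The main step is therefore to play off all $\frac n2$ cuts against each other. The plan is an averaging (amortization) argument: to each cut $c$ attach an \emph{explicit} common subsequence $\sigma_c$ of $w_1^{(c)}$ and $\overline{w_2^{(c)}}^{\,R}$ — obtained, say, by greedily matching the excursions of the $\pm1$-walk, or by matching maximal runs of the two words — lower-bound $|\sigma_c|$ by a potential $\Phi(c)$ that is easy to control, and prove $\frac{2}{n}\sum_c\Phi(c)\ge\frac n3$, so that $|\sigma_c|\ge\frac n3$ for some $c$. A variant would pass to a linear-programming/maximum-flow relaxation of ``LCS respecting a diameter'', establish the fractional bound $\frac n3$ there by averaging over diameters, and round; the balance condition and the choice of cut are precisely the ingredients that should push the fractional optimum above $\frac n4$.

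The step I expect to be the real obstacle is exactly this gap between $\frac 12$ and $\frac 23$: every natural canonical common subsequence I can write down is only guaranteed to have length $\frac 14 n$ on the worst cut, and showing that its \emph{average} over all cuts jumps to $\frac 13 n$ requires a genuinely global cancellation of local losses that I do not see how to force. An alternative, structural route — induction on the number of runs of $w$, using a local exchange argument to show that extremal words have boundedly many runs and then checking those explicitly (where, as the examples constructed later in the paper confirm, $\frac 23 n$ is attained) — runs into the dual difficulty that shortening or rearranging runs can decrease $\max_c\mathrm{LCS}$, and one must prove it never drops below the $\frac 13 n$ threshold. This is the same wall that has kept even the weaker Conjecture~\ref{conj-weak-anti}, which this statement implies, stuck at the trivial $\frac 12 n$ bound.
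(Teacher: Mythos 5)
You have correctly identified that the statement labelled Conjecture~\ref{conj-strong-anti} is an open conjecture: the paper contains no proof of it, only Proposition~\ref{prop-anti}, which establishes the weaker bound $|s_1|=|s_2|=\frac14|w|$. Your proposal is therefore not a proof and does not claim to be one; what you do establish is sound. The reformulation is correct: $s_1s_2$ with $|s_1|=|s_2|$ is an antipalindrome iff $s_2=\overline{s_1}^{\,R}$, and since reversal and complementation preserve the subsequence relation, the conjecture is equivalent to finding a cut $w=w_1w_2$ with a common subsequence of $w_1$ and $\overline{w_2}^{\,R}$ of length $\frac n3$. Your observation that these two words have the same letter counts, and your recovery of the $\frac12 n$ bound from it, is in substance identical to the paper's proof of Proposition~\ref{prop-anti} (the paper does it for an \emph{arbitrary} cut by taking $0^k1^k$ or $1^k0^k$ with $k\ge\frac n4$, so your appeal to the cycle lemma is superfluous).

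The gap is the entirety of the main step, and I would flag one concrete structural obstacle to the averaging plan beyond the one you name. An averaging argument proves $\max_c \Phi(c)\ge\frac n3$ by proving $\mathrm{avg}_c\,\Phi(c)\ge\frac n3$ with $\Phi(c)\le\mathrm{LCS}(w_1^{(c)},\overline{w_2^{(c)}}^{\,R})$; but the extremal word $0^i1^i(01)^i1^i0^i$ shows the conjectured bound is tight, so on that word $\mathrm{avg}_c\,\mathrm{LCS}\le\max_c\mathrm{LCS}\approx\frac n3$. Hence the averaging route can only succeed if, on the extremal word, essentially \emph{every} cut attains $\frac n3$ and your potential $\Phi$ is essentially exact at every cut --- there is no slack anywhere for a crude canonical subsequence to lose. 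Before investing in this approach you should at minimum verify computationally whether $\mathrm{avg}_c\,\mathrm{LCS}\ge\frac n3$ actually holds on $0^i1^i(01)^i1^i0^i$ and on the other near-extremal words found in the search up to $n=30$; if it fails there, averaging over cuts is dead as stated and one must average a cleverer quantity or abandon the idea. Your structural alternative (induction on the number of runs) faces the difficulty you already name. In short: the reduction and the $\frac12 n$ baseline are correct and match the paper, but no progress is made past Proposition~\ref{prop-anti}, which is consistent with the paper's own statement that no bound better than $\frac12 n$ is known.
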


We checked this conjecture up to $n = 30$ by computer. The worst known case for the both conjectures is provided by the word $w = 0^i 1^i (01)^i 1^i 0^i$ showing the tightness of the conjectured bound (by tightness everywhere in this note we understand the existence of a lower bound different from the conjectured bound by at most a small additive constant). The bound $\frac{1}{2}n$ instead of $\frac{2}{3}n$ can be easily proved, but no better bound is known.

\begin{propositionr}[Brevier, Preissmann and Seb\H{o}, \cite{Brevier}]\label{prop-anti}
	Conjecture \ref{conj-strong-anti} is true when replacing $|s_1|=|s_2|=\frac{1}{3}|w|$ by $|s_1|=|s_2|=\frac{1}{4}|w|$.
\end{propositionr}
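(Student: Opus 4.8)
The plan is to establish the stronger fact that \emph{every} balanced bisection of the circle already works, so that circularity plays no role at all; indeed the argument applies verbatim to a linear word of even length with equally many zeros and ones. So I fix an arbitrary decomposition $w = w_1 w_2$ into two words with $\abs{w_1} = \abs{w_2} = n/2$, and let $a$ be the number of zeros in $w_1$. Then $w_1$ has $n/2 - a$ ones, and since $w$ has exactly $n/2$ zeros in total, $w_2$ has $n/2 - a$ zeros and therefore $a$ ones.

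The only real observation is that this counting identity forces a majority letter of $w_1$ and a majority letter of $w_2$ to be complementary. Concretely I would argue by the dichotomy $a \ge n/4$ versus $a < n/4$. In the first case $w_1$ contains at least $n/4$ zeros and $w_2$ contains $a \ge n/4$ ones, so I take $s_1 = 0^{n/4}$ as a subsequence of $w_1$ and $s_2 = 1^{n/4}$ as a subsequence of $w_2$; then $s_1 s_2 = 0^{n/4} 1^{n/4}$ is an antipalindrome of length $n/2$ with $\abs{s_1} = \abs{s_2} = n/4$. In the second case $w_1$ has more than $n/4$ ones while $w_2$ has $n/2 - a > n/4$ zeros, so I take $s_1 = 1^{n/4}$ and $s_2 = 0^{n/4}$, and $s_1 s_2 = 1^{n/4} 0^{n/4}$ is again an antipalindrome. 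In both cases the required partition and subsequences have been exhibited.

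Since $n$ is only assumed divisible by $6$, the number $n/4$ need not be an integer, so strictly I would run the argument with $\ceil{n/4}$ in place of $n/4$. This is harmless: in the dichotomy the relevant letter count always attains $\ceil{n/4}$ --- when $a \ge n/4$ we get $a \ge \ceil{n/4}$ because $a$ is an integer, and when $a < n/4$ we get $n/2 - a \ge \ceil{n/4}$ --- and the resulting subsequences of length $\ceil{n/4}$ differ from $\tfrac{1}{4}n$ by less than one, which is covered by the additive-constant convention in force throughout the note.

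Finally, there is essentially no hard step: the whole proof is the remark that a bisection of a balanced word has complementary majority letters in its two halves, followed by picking monochromatic subsequences. The reason this only yields $\tfrac{1}{4}n$ and not the conjectured $\tfrac{1}{3}n$ is that it discards one of the two letters in each half entirely; getting past $\tfrac{1}{4}n$ would require interleaving zeros taken from one part with ones taken from the other, and controlling such an interleaving across a well-chosen cut is precisely the obstacle in Conjectures~\ref{conj-weak-anti} and~\ref{conj-strong-anti}.
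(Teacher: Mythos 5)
Your proof is correct and takes essentially the same approach as the paper's: both fix an arbitrary balanced bisection, use the counting identity that the number of zeros in $w_1$ equals the number of ones in $w_2$ (and vice versa), and extract a monochromatic antipalindrome $0^k1^k$ or $1^k0^k$ with $k \ge \frac{1}{4}n$. The only difference is that you spell out the dichotomy and the integrality of $n/4$ explicitly, which the paper leaves implicit.
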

\begin{proof}
	Consider an arbitrary partition of $w$ into two linear words $w_1, w_2$ of equal length, $w = w_1w_2$. The number of zeros in $w_1$ is the same as the number of ones in $w_2$ and vice versa. Thus we can pick an antipalindromic subsequence $0^k1^k$ or $1^k0^k$ with $k \ge \frac{1}{4}n$ having the required properties.
\end{proof}

%From the algorithmic point of view, it is known that a longest common subsequence of two binary words of length $n$ can be found in $O(n^2)$ time \cite{Masek1980}. Using the Four Russians algorithm, the time can be improved to $O(n^2 / \log n)$ time \cite{Masek1980}. 
%On the other hand, a longest common subsequence of two words cannot be found in $n^{2 - \epsilon}$ for any $\epsilon > 0$ unless SETH fails \cite{Bringmann2015}. The same holds true for finding a longest palindromic subsequence of a binary word of length $n$.

%Checking whether a word can be partitioned into two equal scattered subwords is NP-complete \cite{Buss2014, Rizzi2017}.

\section{Circular Words}

A natural idea is to look at palindromic subsequences instead of antipalindromic ones. This leads to a number of interesting conjectures which we describe in this section. First, we formulate palindromic counterparts to Conjectures \ref{conj-weak-anti} and \ref{conj-strong-anti}.

\begin{conjecturer}\label{conj-weak-pal}
	Every binary circular word of length $n$ has a palindromic subsequence of length at least $\frac{2}{3}n$.
\end{conjecturer}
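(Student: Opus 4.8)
The plan is to reduce the problem to a statement about \emph{long runs} of a single letter. Consider a binary circular word $w$ of length $n$, and let $z$ be the number of zeros and $o=n-z$ the number of ones; without loss of generality $z \ge o$, so $z \ge \frac{n}{2}$. First I would try the cheap bound: any block of $k$ consecutive equal letters read as a linear subsequence is a palindrome of length $k$, and more generally a subsequence of the form $0^a v 0^a$ (with $v$ anything over $\{0\}$) is a palindrome. Since the zeros occupy at least half the circle, a pigeonhole/averaging argument on how the $z$ zeros are distributed around the circle should give a palindromic subsequence of length noticeably more than $\frac{n}{2}$, but to reach the full $\frac{2}{3}n$ one needs to be cleverer and also exploit the ones.

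The key structural idea I would pursue: cut the circle into two arcs $A_1, A_2$ of (nearly) equal length, and inside each arc greedily extract a maximal monochromatic-framed palindrome. More precisely, for a fixed cut point, write $w = u\,x\,u'$ where $u \in \{0\}^*$ is the maximal run of zeros just after the cut and $u'\in\{0\}^*$ is the maximal run of zeros just before it; then $0^{\min(|u|,|u'|)} \cdots$ can be mirrored. Iterating this ``peel from both ends'' procedure — at each step, if the two ends carry the same letter, append that pair to the palindrome and recurse on the interior; if they differ, discard one end — is exactly the standard longest-common-subsequence-with-the-reversal dynamic, and the question becomes: over all $n$ rotations (choices of cut point), what is the maximum length produced? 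I would set up a counting argument showing that the \emph{average} over all cut points of this greedy quantity is at least $\frac{2}{3}n$, so some cut works.

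To make the averaging work I expect one needs a charging scheme: each position $i$ of the circle is ``wasted'' (discarded rather than matched) only when, at the moment the peeling reaches it, the opposite end has the other letter. Summing the wasted positions over all $n$ cuts, one wants to show the total is at most $\frac{1}{3}n^2$, equivalently that on average at most $\frac{1}{3}$ of the circle is discarded. A promising route is to relate discards to \emph{alternations}: a position is hard to match precisely near boundaries between a zero-block and a one-block, and the word $0^i 1^i (01)^i 1^i 0^i$ from the excerpt — which is conjectured extremal for the antipalindromic version — should be re-examined here as the candidate worst case, guiding the constant. The main obstacle, and the reason this conjecture is open, is precisely controlling this discard count: the greedy peeling is globally coupled (a bad choice early forces waste later), so a naive local/averaging bound only yields $\frac{1}{2}n$, and squeezing out the extra $\frac{1}{6}n$ seems to require a genuinely new combinatorial insight — perhaps an exchange/uncrossing argument on the matched pairs, or an amortized potential that tracks the imbalance between unmatched zeros and unmatched ones as the cut point rotates.
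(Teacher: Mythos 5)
This statement is a \emph{conjecture} in the paper, not a theorem: the authors explicitly state that no bound better than the trivial $\frac{1}{2}n$ (obtained by taking all occurrences of the majority letter) is known, and their contribution in Section~\ref{sect-lb} is in the opposite direction --- a construction (Theorem~\ref{thm-main-lb}) showing that $\frac{2}{3}n$ cannot be improved. So there is no ``paper's own proof'' to compare against, and your proposal should be judged on whether it closes the open problem. It does not, and you say so yourself in the final sentence.

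Concretely, the gaps are these. First, the central claim --- that the average over all $n$ cut points of the length produced by your peeling procedure is at least $\frac{2}{3}n$ --- is exactly the content of the conjecture and is asserted without any argument; the ``charging scheme'' bounding total waste by $\frac{1}{3}n^2$ is named but never defined, and no mechanism is proposed for why a discarded position at one cut should be matched at another. Second, the peeling procedure as described is not well defined: when the two ends differ you must choose which end to discard, and a greedy choice is not equivalent to the longest-common-subsequence dynamic program (which takes a maximum over both choices); a lower bound proved for one greedy rule would not automatically transfer, and an adversarial word can make any fixed local rule waste more than $\frac{1}{3}$ of the positions at a given cut. Third, even the weaker averaging statements you gesture at (e.g.\ exploiting long monochromatic runs) only recover the known $\frac{1}{2}n$ bound, as you acknowledge. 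What you have written is a reasonable survey of why the problem is hard, and the word $0^i(10)^i1^i$ (which the paper identifies as critical for Conjecture~\ref{conj-cuts}) is indeed the right kind of test case to keep in mind, but there is no proof here, and the conjecture remains open.
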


\begin{conjecturer}\label{conj-strong-pal}
	Let $w$ be a binary circular word of length $n$ divisible by $6$. Then $w$ can be partitioned into $2$ linear words $w_1, w_2$ of equal length, $w = w_1w_2$, having subsequences $s_1, s_2$ such that $s_1 = s_2^R$ (that is, $s_1s_2$ is a palindrome) and $|s_1s_2| = \frac{2}{3}|w|$.
\end{conjecturer}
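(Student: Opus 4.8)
The plan is to recast Conjecture~\ref{conj-strong-pal} in terms of cuts and longest common subsequences, dispose of the ``unbalanced'' case by a counting argument in the spirit of Proposition~\ref{prop-anti}, and then concentrate all the remaining difficulty in the ``balanced'' case.

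First, for an antipodal cut $w = w_1w_2$ with $\abs{w_1} = \abs{w_2} = \tfrac n2$, a pair of subsequences $s_1$ of $w_1$ and $s_2$ of $w_2$ with $s_1 = s_2^R$ is precisely a common subsequence of $w_1$ and $w_2^R$, and the largest admissible $\abs{s_1}$ equals $\mathrm{LCS}(w_1, w_2^R)$. Hence Conjecture~\ref{conj-strong-pal} is equivalent to: every binary circular word of length $n$ with $6 \mid n$ has an antipodal cut $w = w_1w_2$ with $\mathrm{LCS}(w_1, w_2^R) \ge \tfrac n3$. One should resist the tempting ``mirror'' idea of keeping the letters of $w_1$ and $w_2$ that coincide under the reflection across the cut: it can produce nothing, since for $w = (01)^{n/2}$ every reflected pair joins a $0$ with a $1$. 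The proof must therefore use the full flexibility of $\mathrm{LCS}$, that is, non-reflected matchings.

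Second, let $z$ and $o = n-z$ be the numbers of $0$s and $1$s in $w$. If the majority letter occurs at least $\tfrac 23 n$ times, say $z \ge \tfrac 23 n$, then as the cut rotates the number of $0$s in $w_1$ changes by at most $1$ per step and has average $\tfrac z2$, so at some cut it lands in the interval $[\tfrac n3,\, z - \tfrac n3]$; then both $w_1$ and $w_2$ contain at least $\tfrac n3$ zeros and we may take $s_1 = s_2 = 0^{n/3}$ (here $6 \mid n$ guarantees $\tfrac n3, \tfrac{2n}{3} \in \mathbb{Z}$). Unlike Conjecture~\ref{conj-strong-anti}, Conjecture~\ref{conj-strong-pal} carries no balance hypothesis, so this reduction really is needed.

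There remains the case $\tfrac n3 < o \le z < \tfrac 23 n$, which is where the real work lies: a palindromic subsequence of length $\ge \tfrac 23 n$ must now use both letters in a fixed positive proportion. A first attempt is to take the cut that balances both letters at once --- one condition, since the number of $1$s in $w_1$ equals $\tfrac n2$ minus the number of $0$s in $w_1$, and attainable up to the parity of $z$ by a discrete intermediate value argument --- so that $w_1$ and $w_2^R$ have the same content; but two binary strings of equal content can still have $\mathrm{LCS}$ equal to only half their length (for instance $\mathrm{LCS}(0^a1^b,\,1^b0^a) = \max(a,b) = \tfrac{a+b}{2}$ when $a=b$), and $w = (0^{n/4}1^{n/4})^2$ even has balanced cuts attaining exactly this bound of $\tfrac n4$. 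So one is forced to exploit that $w_1$ and $w_2^R$ come from a partition of a single circular word --- morally, that a circular word cannot be ``anti-sorted against itself'' at every cut simultaneously. Plausible tools are a run-length analysis (empirically a cut through the middle of a longest run already works, so one would show it yields $\ge \tfrac n3$ whenever a run is reasonably long and treat ``short-run'' words separately), a local exchange argument relating $\mathrm{LCS}(w_1,w_2^R)$ at adjacent cuts, or a weighted average over all $\tfrac n2$ distinct cuts. This case is the main obstacle: it is \emph{not} governed by $0^i1^i(01)^i1^i0^i$ --- cut through the middle of its $0^{2i}$ run, that word has a palindromic subsequence of length $n - O(1)$ --- but by more delicate constructions that are genuinely asymmetric under reversal, and nothing currently rules out all $\tfrac n2$ cuts being simultaneously bad. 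This is the same difficulty that keeps the antipalindromic Conjectures~\ref{conj-weak-anti} and~\ref{conj-strong-anti} open, for which likewise only a $\tfrac n4$-type bound (Proposition~\ref{prop-anti}) has been established; a complete proof would probably require either a genuinely global statement about the $\tfrac n2$ cuts of a circular word, or a new lower bound for $\mathrm{LCS}$ of two binary strings arising from a partition of a cyclic word.
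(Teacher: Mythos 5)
What you have written is not a proof, and you say so yourself: the case $\tfrac n3 < o \le z < \tfrac 23 n$ --- which is the entire substance of the conjecture --- is left open, with only a survey of candidate strategies that you then show to be insufficient (the mirror matching dies on $(01)^{n/2}$, the content-balancing cut gives only $\tfrac n4$ on $(0^{n/4}1^{n/4})^2$). To be clear about what you are being measured against: this statement is labelled a \emph{conjecture}, and the paper contains no proof of it either --- only a computer verification up to $n=30$ and, in the proposition that follows it, a proof of the weaker bound $|s_1s_2| \ge \tfrac12|w|$ obtained by rotating the cut until the majority letter is split evenly. So there is no proof in the paper to compare your route against; the gap in your argument is exactly the gap that keeps the problem open, and no amount of reorganization of the easy cases closes it.

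That said, the parts you do establish are correct and worth recording. The reformulation --- the conjecture holds iff some cut $w=w_1w_2$ into equal halves satisfies $\mathrm{LCS}(w_1,w_2^R)\ge\tfrac n3$ --- is right, since $s_2$ is a subsequence of $w_2$ iff $s_2^R$ is a subsequence of $w_2^R$. The discrete intermediate-value argument in the heavy-majority case is sound: rotating the cut by $\tfrac n2$ steps swaps the two halves while the zero-count of $w_1$ changes by at most $1$ per step, so if $z\ge\tfrac23n$ some cut gives each half at least $\tfrac n3$ zeros and $s_1=s_2=0^{n/3}$ works. This settles the conjecture for words in which one letter has frequency at least $\tfrac23n$, which is a genuine (if modest) extension of the paper's $\tfrac12|w|$ proposition in that regime. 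Your diagnosis of where the difficulty lives is also consistent with the paper's evidence: its worst known instance for this conjecture, $0^{2i}(10)^i1^{2i}$, is a balanced, reversal-asymmetric word squarely inside the case you cannot handle, whereas $0^i1^i(01)^i1^i0^i$ is, as you note, essentially palindromic across the cut through its $0^{2i}$ run.
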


We checked both conjectures up to $n = 30$ by computer.
The worst known case for Conjecture \ref{conj-strong-pal} is provided by the word $0^{2i} (10)^i 1^{2i}$, showing the tightness of the conjectured bound. 
The word $0^i (10)^i 1^i$ provides an upper bound of $\frac{3}{4}n$ for Conjecture~\ref{conj-weak-pal}. A better bound is discussed in Section \ref{sect-lb}.

In Conjecture \ref{conj-weak-pal} it is enough to pick the subsequence consisting of all appearances of the letter with the largest frequency to get the $\frac{1}{2}n$ lower bound. It is also easy to prove the $\frac{1}{2}n$ bound for Conjecture \ref{conj-strong-pal}. No better bounds are known to be proved.

\begin{propositionr}
	Conjecture \ref{conj-strong-pal} is true when replacing $|s_1s_2| = \frac{2}{3}|w|$ by $|s_1s_2| = \frac{1}{2}|w|$.
\end{propositionr}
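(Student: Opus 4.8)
The plan is to partition the circular word $w$ of length $n$ divisible by $6$ into two halves $w_1, w_2$ with $|w_1| = |w_2| = \frac{n}{2}$, and then extract a common subsequence $s$ of $w_1$ and $w_2^R$ of length at least $\frac{n}{4}$; setting $s_1 = s$ and $s_2 = s^R$ gives a palindrome $s_1 s_2$ with $|s_1 s_2| = \frac{n}{2} = \frac{1}{2}|w|$, as required. So everything reduces to showing that for some choice of the partition point on the circle, the two halves (one of them reversed) share a common subsequence of length $\ge \frac{n}{4}$.

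First I would reduce to a counting statement about letter multiplicities. For a fixed partition $w = w_1 w_2$, let $p$ be the number of $0$s in $w_1$ and $q$ the number of $0$s in $w_2$; then $w_1$ has $\frac{n}{2} - p$ ones and $w_2$ has $\frac{n}{2} - q$ ones. A word of the form $0^a 1^b$ is a subsequence of $w_1$ whenever $a \le p$ and $b \le \frac{n}{2} - p$, and $0^a 1^b = (1^b 0^a)^R$ is a subsequence of $w_2^R$ exactly when $1^b 0^a$ is a subsequence of $w_2$, which holds for instance when $b$ ones precede $a$ zeros; more simply, $0^{\min(p,q)}$ is a common subsequence of $w_1$ and $w_2^R$, and $1^{\min(n/2-p,\,n/2-q)}$ is as well. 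Taking the better of these two gives a common subsequence of length at least $\max\bigl(\min(p,q),\ \frac{n}{2} - \max(p,q)\bigr) \ge \frac{n}{4}$ whenever $\min(p,q) \ge \frac{n}{4}$ or $\max(p,q) \le \frac{n}{4}$; the remaining case is $p < \frac{n}{4} \le q$ (or symmetrically), where one should instead combine a block of $0$s from one half with a block of $1$s, using that $w_2$ has $\ge \frac{n}{4}$ zeros while $w_1$ has $\ge \frac{n}{4}$ ones, so $1^k 0^k$ with $k = \frac{n}{4}$ is a subsequence of $w_1$ times… — here one uses the antipalindrome-style argument of Proposition~\ref{prop-anti} adapted to the palindromic setting by reading $w_2$ backwards.

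The cleaner route, which I would actually write up, is: pick the partition point so that $w_1$ and $w_2$ contain the same number of $0$s (possible by a discrete intermediate-value argument as the partition point moves around the circle, since the count changes by $\pm 1$ at each step and returns to its start after a full rotation, hence hits the average $\tfrac12\cdot(\#0\text{s in }w)$, or a value differing from it by at most one — the parity is handled by $6 \mid n$). With $p = q$ this common count is at least $0$; if it is $\ge \frac{n}{4}$ we use $0^{n/4}$ as the common subsequence, and if it is $< \frac{n}{4}$ then both halves have more than $\frac{n}{4}$ ones and we use $1^{n/4}$. Either way we obtain a common subsequence of $w_1$ and $w_2^R$ of length $\ge \frac{n}{4}$, completing the proof.

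The only delicate point is the discrete intermediate-value step: one must check that as the cut moves by one position the number of $0$s in the first half changes by at most one, that after $\frac{n}{2}$ steps it has swapped with the count in the other half, and that therefore some intermediate cut makes the two counts equal or off by one — and that an off-by-one discrepancy still leaves both counts $\ge \lceil \frac{n}{4}\rceil - 1$ or both one-counts that large, which suffices up to the harmless additive constant already tolerated in the statement. I expect this bookkeeping to be the main (though entirely routine) obstacle; everything else is immediate.
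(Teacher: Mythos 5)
Your ``cleaner route'' is essentially the paper's own proof: rotate the cut by a discrete intermediate-value argument until the two halves carry equal numbers of one letter, then take a constant run of whichever letter both halves contain at least $\frac{n}{4}$ times. The fixed-partition case analysis in your second paragraph should be discarded (it genuinely fails for a fixed cut, e.g.\ $w_1 = 1^{n/2}$, $w_2 = 0^{n/2}$ admits no nonempty palindrome $s_1s_2$ at all), but since you abandon it in favour of the rotation argument, your final proof is correct and matches the paper's, sharing even its minor parity/off-by-one imprecision in the discrete intermediate-value step.
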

\begin{proof}
	Consider an arbitrary partition of $w$ into two linear words $w_1, w_2$ of equal length, $w = w_1w_2$. Assume that the number of ones in $w$ is at least $\frac{|w|}{2}$, and $w_1$ has less ones than $w_2$. By changing the partition by one letter each time (by adding a subsequent letter to the end of $w_1$ and removing one from the beginning), we get an opposite situation in $\frac{1}{2}n$ steps. That means that there exists a partition $w = w'_1w'_2$, $|w'_1| = |w'_2|$, such that the number of ones in $w'_1$ is the same as the number of ones in $w'_2$. Thus, we can pick an antipalindromic subsequence $1^k$ with $k \ge \frac{1}{2}n$ having the required properties.
\end{proof}

Conjecture \ref{conj-strong-pal} is about a palindromic subsequence aligned with some cut of the circular word into two equal halves. There are $\frac{n}{2}$ such cuts, so one attempt to simplify the conjecture is to look at only two cuts which are ``orthogonal''. This way we attempt to switch from the circular case to something close to the linear case, which is often easier to deal with.

Let $w$ be a circular word of length $n$ divisible by $4$. Let $w_1 w_2 w_3 w_4$ be some partition of $w$ into four linear words of equal length. Let $p_1p'_1$ and $p_2p'_2$, $|p_1| = |p'_1|$, $|p_2| = |p'_2|$, be the longest palindromic subsequences of $w$ such that $p_1$, $p'_1$, $p_2$, $p'_2$ are subsequences of $w_1w_2$, $w_3w_4$, $w_2w_3$, $w_4w_1$ respectively. Informally, these two palindromes are aligned to two orthogonal cuts of the word $w$ into two linear words of equal length. The partitions $w_1w_2, w_3w_4$ and $w_2w_3, w_4w_1$ are two particular partitions (made by two orthogonal cuts) considered among all $\frac{n}{2}$ partitions in Conjecture \ref{conj-strong-pal}.

\begin{conjecturer}\label{conj-cuts}
	For every word $w$ of length $n$ divisible by $4$ and its every linear representation $w = w_1 w_2 w_3 w_4$, the maximum of the lengths of $p_1p'_1$ and $p_2p'_2$ defined above is at least $\frac{1}{2}n$.
\end{conjecturer}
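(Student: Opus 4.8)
The plan is to find, among the four blocks $w_1, w_2, w_3, w_4$, a single letter value (say $0$ or $1$) and a pair of blocks on which that letter is plentiful, and then build a long palindrome of the form $x^k x^k$ aligned to the appropriate cut. More precisely, write $z_i$ for the number of zeros in $w_i$ and $o_i = \frac{n}{4} - z_i$ for the number of ones, for $i = 1,2,3,4$. A palindrome of the shape $0^a 0^a$ with the first half in $w_1 w_2$ and the second half in $w_3 w_4$ has length $2\min(z_1 + z_2,\, z_3 + z_4)$, and similarly for the other letter and the other cut. So the quantity we must lower-bound is
\[
M = \max\Bigl\{\, \min(z_1+z_2,\, z_3+z_4),\ \min(o_1+o_2,\, o_3+o_4),\ \min(z_2+z_3,\, z_4+z_1),\ \min(o_2+o_3,\, o_4+o_1)\,\Bigr\},
\]
and the claim is $2M \ge \frac{n}{2}$, i.e. $M \ge \frac{n}{4}$.

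First I would reduce to a purely arithmetic statement: set $a = z_1 + z_2$, $b = z_3 + z_4$, $c = z_2 + z_3$, $d = z_4 + z_1$, noting the linear relation $a + b = c + d = z_1+z_2+z_3+z_4 =: Z$, and also that $\frac{n}{2} - a = o_1 + o_2$, etc. Then the four terms inside the max are $\min(a,b)$, $\min(\frac n2 - a,\, \frac n2 - b) = \frac n2 - \max(a,b)$, $\min(c,d)$, and $\frac n2 - \max(c,d)$. So I need: for any reals (or integers) $a,b,c,d \in [0,\frac n2]$ with $a+b = c+d$,
\[
\max\Bigl\{\min(a,b),\ \tfrac n2 - \max(a,b),\ \min(c,d),\ \tfrac n2 - \max(c,d)\Bigr\} \ \ge\ \tfrac n4 .
\]
The key observation is that $\min(a,b) + \bigl(\frac n2 - \max(a,b)\bigr) = \frac n2 - |a - b|$, so if $|a-b| \le \frac n4$ the first pair already gives a term $\ge \frac n4$ (one of the two summands is at least half their sum). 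The same holds for the pair $(c,d)$ when $|c-d| \le \frac n4$. Hence the only dangerous case is $|a - b| > \frac n4$ \emph{and} $|c - d| > \frac n4$ simultaneously. I would then use the constraint $a+b = c+d = Z$ together with $c - d = (z_2 - z_4) + (z_3 - z_1)$ and $a - b = (z_1 - z_3) + (z_2 - z_4)$: writing $p = z_1 - z_3$ and $q = z_2 - z_4$ gives $a - b = p + q$ and $c - d = q - p$, so $|p+q| > \frac n4$ and $|q - p| > \frac n4$. Since each $z_i \in [0, \frac n4]$, we have $p, q \in [-\frac n4, \frac n4]$; I expect a short case analysis on the signs of $p$ and $q$ (or on whether $|p| \ge |q|$) to show that $|p+q| > \frac n4$ and $|p - q| > \frac n4$ together force, say, $\max(a,b) = \frac Z + |p+q|}{2}$ close to $\frac n2$ or $\min(a,b)$ close to $0$ in a way that still leaves one of the four terms $\ge \frac n4$ — in particular one should check that when both differences are large, $Z$ itself is pushed toward $0$ or toward $\frac n2$, and a $0^k$ or $1^k$ subsequence entirely inside two adjacent blocks does the job via the other cut.

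The main obstacle I anticipate is exactly this last case: handling it cleanly may require using \emph{all four} blocks' parameters rather than the lumped sums $a,b,c,d$, because the relation $a+b=c+d$ alone is not enough — one also needs $a, b, c, d$ to be realizable as block-pair sums with each block of size $\frac n4$, which is the content of $p, q \in [-\frac n4, \frac n4]$. If the naive arithmetic still fails, the fallback is to mimic Proposition~\ref{prop-anti}'s counting idea: rotate the cut continuously and use a discrete intermediate-value argument to find a position where the relevant letter-counts on the two sides coincide, giving a monochromatic palindrome of length $\ge \frac n4$ on \emph{one} of the two orthogonal cuts — though making ``one of the two orthogonal cuts'' precise (rather than some intermediate cut) is the subtlety that the arithmetic approach above is designed to avoid. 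I would therefore push the case analysis on $p, q$ through first and only retreat to the continuous argument if a clean finish is not forthcoming.
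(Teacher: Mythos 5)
First, note that the statement you are proving is an open \emph{conjecture} in the paper: the authors only verify it by computer up to $n=30$ and prove the weaker bound $\frac{1}{3}n$ (by a pigeonhole argument on letter counts), so there is no proof in the paper to compare against, and your argument would be a new result if it worked. It does not, for two reasons. First, there is an arithmetic slip at the heart of the "key observation": from $\min(a,b)+\bigl(\tfrac n2-\max(a,b)\bigr)=\tfrac n2-|a-b|\ge \tfrac n4$ you can only conclude that one of the two summands is at least $\tfrac n8$, not $\tfrac n4$ (take $a=\tfrac{3n}{8}$, $b=\tfrac n8$: both terms equal $\tfrac n8$). The correct condition for the pair $(a,b)$ to contribute a term $\ge\tfrac n4$ is that $a$ and $b$ lie on the same side of $\tfrac n4$, which is a much stronger requirement than $|a-b|\le\tfrac n4$.

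Second, and fatally, the reduction to monochromatic palindromes $x^k x^k$ cannot give the bound $\tfrac n2$ at all: the quantity $M$ you define can be as small as $\tfrac n8$. Take $n=8m$ and $w_1=0^m1^m$, $w_2=0^{2m}$, $w_3=0^m1^m$, $w_4=1^{2m}$, i.e.\ $z=(m,2m,m,0)$ and $o=(m,0,m,2m)$. Then all four quantities $\min(z_1+z_2,z_3+z_4)$, $\min(o_1+o_2,o_3+o_4)$, $\min(z_2+z_3,z_4+z_1)$, $\min(o_2+o_3,o_4+o_1)$ equal $m=\tfrac n8$, so the longest single-letter palindrome aligned to either cut has length $\tfrac n4$, half of what is required. (This also defeats your fallback intermediate-value argument, which again only produces a monochromatic palindrome.) The word itself does satisfy the conjecture, but only via a two-letter palindrome such as $0^{m}1^{m}\cdot 1^{m}0^{m}$ across the cut $w_2w_3\mid w_4w_1$. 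So any successful attack must work with palindromes using both letters --- exactly as the paper's own $\tfrac13 n$ proof already does with patterns of the form $0^k1^k$ --- and the purely arithmetic statement you reduce to is simply false. Your own analysis via $p=z_1-z_3$, $q=z_2-z_4$ does show correctly that $\min(|a-b|,|c-d|)=\bigl||p|-|q|\bigr|\le\tfrac n4$, but because of the factor-of-two issue above this only recovers a bound of order $\tfrac n4$ for $|p_ip_i'|$, i.e.\ nothing beyond the paper's Proposition on the $\tfrac13 n$ bound.
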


We checked this conjecture up to $n = 30$ by computer. The worst known case is provided by the already appeared word $0^i (10)^i 1^i$ showing the tightness of the conjectured bound. The bound $\frac{1}{3}n$ can be proved as follows.

\begin{propositionr}
	For every word $w$ of length $n$ divisible by $4$ and its every linear representation $w = w_1 w_2 w_3 w_4$, the maximum of the lengths of $p_1p'_1$ and $p_2p'_2$ is at least $\frac{1}{3}n$.
\end{propositionr}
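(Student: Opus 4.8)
The plan is to find, among the four "half" blocks $w_1w_2$, $w_2w_3$, $w_3w_4$, $w_4w_1$, one where a single letter is very frequent, and use a constant (all-equal) subsequence, which is automatically a palindrome. Write $z$ for the number of zeros and $o$ for the number of ones in $w$, so $z+o=n$; without loss of generality $z \ge o$, hence $z \ge \tfrac{n}{2}$. Let $z_i$ be the number of zeros in $w_i$, so $z_1+z_2+z_3+z_4 = z \ge \tfrac{n}{2}$. I would like one of the two "diagonal" sums $z_1+z_2$, $z_2+z_3$, $z_3+z_4$, $z_4+z_1$ to be at least $\tfrac{n}{3}$: if $z_i+z_{i+1} \ge \tfrac{n}{3}$, then since each quarter $w_iw_{i+1}$ has length $\tfrac{n}{2}$, the longest palindromic subsequence of $w_iw_{i+1}$ is at least $\tfrac{n}{3}$ (take all the zeros), and since $z_i+z_{i+1} \ge \tfrac{n}{3} \ge$ the number of ones in that block would also work, we are done — this directly bounds the relevant $p_jp_j'$ length from below by $\tfrac{1}{3}n$.

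So the crux is: among the four cyclic consecutive pairs $(z_1+z_2, z_2+z_3, z_3+z_4, z_4+z_1)$, at least one is $\ge \tfrac{n}{3}$. Note $(z_1+z_2)+(z_3+z_4) = z$ and $(z_2+z_3)+(z_4+z_1) = z$, so we have two disjoint ways of writing $z$ as a sum of two of these pairs. If all four pairs were $< \tfrac{n}{3}$, then $z = (z_1+z_2)+(z_3+z_4) < \tfrac{2n}{3}$, which is compatible with $z \ge \tfrac{n}{2}$, so this counting alone is not quite enough — I need to also exploit the ones. Here is the fix: apply the same argument simultaneously to ones. If every pair $z_i + z_{i+1} < \tfrac{n}{3}$, then every pair of "one-counts" $o_i + o_{i+1} = \tfrac{n}{2} - (z_i+z_{i+1}) > \tfrac{n}{2} - \tfrac{n}{3} = \tfrac{n}{6}$; that is the wrong direction. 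Instead: since $z_1+z_2 < \tfrac n3$ and $z_3 + z_4 < \tfrac n3$ give $z < \tfrac{2n}{3}$, hence $o = n - z > \tfrac n3$, and then $o_2+o_3$ or $o_4+o_1$ is $> \tfrac n6$... still not enough. The honest route is: pick the pair maximizing $\max(z_i+z_{i+1}, o_i+o_{i+1})$. Since $(z_1+z_2)+(z_2+z_3)+(z_3+z_4)+(z_4+z_1) = 2z$ and the analogous sum for ones is $2o$, the total over all four pairs of $\max(\text{zeros},\text{ones})$ in the pair is at least $\tfrac12\big(2z + 2o\big) = n$ — wait, more carefully: for each pair, $\max(z_i+z_{i+1}, o_i+o_{i+1}) \ge \tfrac12\big((z_i+z_{i+1})+(o_i+o_{i+1})\big) = \tfrac12 \cdot \tfrac n2 = \tfrac n4$. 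That is only $\tfrac n4$ per pair and does not immediately give $\tfrac n3$ for one of them either. So I will instead argue: summing $\max(z_i+z_{i+1},\, o_i+o_{i+1})$ over the four pairs is at least $\max(2z, 2o) = 2z \ge n$ (taking zeros in every pair), so the largest of the four is $\ge \tfrac n4$; to push to $\tfrac n3$ I use that in the block $w_iw_{i+1}$ of length $\tfrac n2$ we may take \emph{either} all zeros or all ones, whichever is larger, i.e. the longest palindromic subsequence there is at least $\tfrac14 \cdot \tfrac n2 + $ ... — the clean statement I will actually prove is that $\max(\text{zeros}, \text{ones})$ in some block of length $\tfrac n2$ is at least $\tfrac n3$, which follows because if block $w_1w_2$ has fewer than $\tfrac n3$ zeros and fewer than $\tfrac n3$ ones it has length $< \tfrac{2n}{3}$... but its length is exactly $\tfrac n2 < \tfrac{2n}{3}$, so that is vacuous. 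Hence the real input must be global: I will combine the global imbalance $z \ge \tfrac n2$ with a pigeonhole on the four consecutive pairs around the circle, using that two \emph{disjoint} pairs already sum to $z$, to force one consecutive pair of zero-counts to be $\ge \tfrac z2 \ge \tfrac n4$, and then sharpen via the quarter structure to $\tfrac n3$.

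\begin{proof}[Proof sketch of the plan]
Suppose every block among $w_1w_2$, $w_2w_3$, $w_3w_4$, $w_4w_1$ contains fewer than $\tfrac n3$ zeros and, by the symmetric argument, fewer than $\tfrac n3$ ones. From $w = w_1w_2w_3w_4$ being balanced-or-imbalanced as above and the disjoint decompositions $n = |w_1w_2| + |w_3w_4| = |w_2w_3|+|w_4w_1|$, track the letter appearing most often in $w$ and show that one of the four blocks must contain at least $\tfrac13 n$ copies of it; taking those copies yields a palindromic (constant) subsequence of length $\ge \tfrac13 n$ inside a single block $w_iw_{i+1}$, hence inside $w_1w_2$ or $w_3w_4$ (giving the bound for $p_1p_1'$) or inside $w_2w_3$ or $w_4w_1$ (giving it for $p_2p_2'$). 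Either way the maximum of the two lengths is at least $\tfrac13 n$.
\end{proof}

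\noindent The main obstacle is exactly the counting in the middle paragraph: a naive averaging over the four consecutive pairs only yields $\tfrac n4$, so the argument must genuinely use that the two relevant partitions are orthogonal (equivalently, that two disjoint consecutive pairs already exhaust the circle) together with the global letter-frequency bound to reach $\tfrac13 n$; getting this bookkeeping tight is the crux of the proof.
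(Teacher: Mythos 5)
There are two genuine gaps here, and either one alone is fatal.

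First, your central counting claim --- that one of the four half-blocks $w_1w_2$, $w_2w_3$, $w_3w_4$, $w_4w_1$ must contain at least $\frac{1}{3}n$ copies of the majority letter --- is false. Take $w=(01)^{n/2}$ with the natural partition into quarters: every half-block then contains exactly $\frac{n}{4}$ zeros and $\frac{n}{4}$ ones, so no block contains $\frac{n}{3}$ copies of any letter. You repeatedly notice in your own text that averaging only yields $\frac{n}{4}$, and the final ``proof sketch'' does not close this gap; it merely restates the claim as something to be shown. It cannot be shown, because it is not true.

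Second, and independently, the approach misses a constraint in the statement: the palindrome $p_1p_1'$ must satisfy $|p_1|=|p_1'|$ with $p_1$ a subsequence of $w_1w_2$ and $p_1'$ a subsequence of $w_3w_4$ (and analogously for $p_2p_2'$). A constant run of length $\frac{n}{3}$ sitting entirely inside one half-block therefore does not produce a valid $p_jp_j'$ of that length; you would also need an equal-length mirror piece in the \emph{opposite} half-block. The correct argument is a dichotomy that exploits exactly this structure: assume $|p_1p_1'|<\frac{1}{3}n$; since one could always take $p_1=p_1'=0^k$ with $k=\min(z_{12},z_{34})$ (and likewise for ones), this forces, after swapping letters and/or halves, that $w_1w_2$ has fewer than $\frac{n}{6}$ zeros and $w_3w_4$ has fewer than $\frac{n}{6}$ ones. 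Hence $w_1w_2$ has more than $\frac{n}{3}$ ones, so each of $w_1$ and $w_2$ (each of length $\frac{n}{4}$) has at least $\frac{n}{12}$ ones, and symmetrically each of $w_3$ and $w_4$ has at least $\frac{n}{12}$ zeros. Then $p_2=1^{n/12}0^{n/12}$ taken from $w_2w_3$ and $p_2'=0^{n/12}1^{n/12}$ taken from $w_4w_1$ give the palindrome $1^{n/12}0^{n/6}1^{n/12}$ of length $\frac{n}{3}$ across the orthogonal cut. So the global imbalance you were hunting for is used not to find a long constant run in one block, but to guarantee matching runs on both sides of the other cut.
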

\begin{proof}
	Suppose that $|p_1p'_1| < \frac{1}{3}n$. Then without loss of generality we can assume that the number of zeros in $w_1w_2$ and the number of ones in $w_3w_4$ is less than~$\frac{1}{6}n$. Then by the pigeonhole principle the number of ones in both $w_1$ and $w_2$, and the number of zeros in both $w_3$ and $w_4$ is at least $\frac{1}{12}n$. It means that we can pick a subsequence of $\frac{1}{12}n$ zeros and then $\frac{1}{12}n$ ones from $w_4w_1$ and a symmetrical subsequence from $w_2w_3$. Thus we get $|p_2p'_2| \ge \frac{1}{3}n$.
 \end{proof}

In fact, a slightly stronger statement that the total length of $p_1p'_1$ and $p_2p'_2$ is $\frac{2}{3}n$ can be proved this way. We conjecture the optimal bound for this value to be equal to $n$.

Even being proved, the bound of $\frac{1}{2}n$ in this conjecture would not improve the known bound for Conjecture \ref{conj-strong-pal}. However, Conjecture \ref{conj-cuts} deals with palindromic subsequences of only two linear words, and thus seems to be easier to handle. Considering four regular cuts instead of two should already improve the bound for Conjecture \ref{conj-strong-pal}.

\section{Showing Asymptotic Tightness of Conjecture \ref{conj-weak-pal}} \label{sect-lb}

In this section we present the main technical contribution of this paper, which is an infinite family of words providing a better upper bound for Conjecture \ref{conj-weak-pal}. In fact, we show a stronger result for words over an arbitrary alphabet. Below we consider words over the alphabet $\{0,\ldots,k-1\}$, i.e. $w \in \{0,\ldots, k-1\}^{*}$.

\begin{definitionr}
	We say that $w'$ is a {\em consecutive subword} of a word $w$ if there exist words $u, v$ with $w = uw'v$.
	
	We call a word $w \in \{0,\ldots, k-1\}^{*}$ \emph{of type $n$} if it is a consecutive subword of $(0^{n} 1^{n} \ldots (k-1)^{n})^{*}$ or a consecutive subword of $((k-1)^{n} \ldots 1^{n} 0^{n})^{*}$. In the first case we write $w \in S_n'$, in the second case we write $w \in S_n''$.
	 
	Furthermore, we define $S_n = S_n' \cup S_n''$.
\end{definitionr}

Thus $w \in S_n'$ if it is a concatenation of blocks $(0^{n} 1^{n} \ldots (k-1)^{n})$, where the first and the last blocks may be shorter, and analogously for $w \in S_n''$.

We denote by $\overline{w}$ the word we get when exchanging every letter $\ell$ by $(k-1-\ell)$, e.g. $\overline{01\ldots(k-1)} = (k-1)(k-2)\ldots0$.
We see directly that $w \in S_n$ if and only if $\overline{w} \in S_n$.
Furthermore, we have that $w \in S_n$ if and only if $w^{R}\in S_n$.

\begin{lemmar}\label{le:upperBound}
	Let $w_1 \in S_{n_1}$ be a word of length $\ell_1$ and $w_2 \in S_{n_2}$ be a word of length $\ell_2$, where $n_1 > n_2$.
	Then, the length of the longest common subsequence of $w_1$ and $w_2$ is at most $\frac{\ell_1 + \ell_2}{k+1} + \ell_1 \frac{n_2}{n_1} + 2 n_2$.
\end{lemmar}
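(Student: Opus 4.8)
The plan is to track how a common subsequence $s$ of $w_1$ and $w_2$ interacts with the block structure of $w_2$, which is the coarser one since its blocks have length $n_2 < n_1$. First I would fix an embedding of $s$ into $w_1$ and an embedding of $s$ into $w_2$, and use the embedding into $w_2$ to partition $s$ according to which block of $w_2$ each symbol lands in. A maximal run of symbols of $s$ inside a single block of $w_2$ is a power of a single letter, say $a^t$; crucially, all but at most $2n_2$ of the symbols of $s$ (those in the first and last partial blocks of $w_2$) lie in full blocks of $w_2$, so it suffices to bound the contribution of full blocks and add the error term $2n_2$.

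Next I would exploit the embedding into $w_1$. Consider the positions in $w_1$ occupied by $s$. Two consecutive symbols of $s$ that are equal and lie in the same block of $w_2$ are mapped by the $w_1$-embedding either into the same block of $w_1$ or into two different blocks of $w_1$ equal to that same letter; between two such occurrences in $w_1$ there must be at least $(k-1)n_1$ intervening positions of $w_1$ if they are in different blocks (a full cycle through the other $k-1$ letters), whereas a block of $w_2$ only spans $n_2$ positions of $w_2$. So within one block of $w_2$, the part of $s$ it carries occupies a span of at most $n_2$ positions in $w_2$ and a span in $w_1$ that is either $\le n_1$ (staying within a $w_1$-block, hence contributing at most $n_1$, but also at most $n_2$ since $s$ has only $\le n_2$ symbols there) or can jump to later $w_1$-blocks. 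The key quantitative estimate is: the symbols of $s$ contributed by a single full $w_2$-block number at most $n_2$, and moreover they consume a length in $w_1$ of at least (number of such symbols beyond the first per $w_1$-block jump) times roughly $(k-1)n_1/(k-1)$ — I would set up a charging argument so that each $w_2$-block either contributes few symbols or consumes many $w_1$-positions, and $w_1$ has only $\ell_1$ positions total.

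Concretely, I would split the full $w_2$-blocks into two classes: those whose $s$-symbols all embed into a single $w_1$-block ("cheap" blocks) and the rest ("expensive" blocks). Each expensive block forces the $w_1$-embedding to advance by at least one full $w_1$-cycle of length $kn_1$ minus a block, giving at least, say, $n_1$ new positions of $w_1$ consumed per unit of "extra" symbols; summing over all expensive blocks bounds their total contribution by about $\ell_1 n_2/n_1$ (this is where the $\ell_1 \frac{n_2}{n_1}$ term arises). For the cheap blocks, each contributes at most $n_2$ symbols and, more usefully, the union of all the symbols of $s$ in cheap $w_2$-blocks forms within $w_2$ a subsequence that is "type-$n_2$-respecting" while within $w_1$ it sits in blocks of length $n_1$; a counting of matched symbols by letter across the $k$ letters yields the $\frac{\ell_1+\ell_2}{k+1}$ term — essentially because a common subsequence that respects both block structures of two such words can use each of the $k$ letters only up to a $\frac{1}{k+1}$-fraction on average, as in the worst-case example $0^i1^i(01)^i1^i0^i$. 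I would make this precise by a weighting/averaging argument over the $k$ letters and the two words' lengths.

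The main obstacle I expect is the bookkeeping in the charging argument: making the dichotomy between "cheap" and "expensive" $w_2$-blocks airtight, ensuring that positions of $w_1$ are not double-counted across different $w_2$-blocks, and correctly handling boundary blocks and the case where a single $w_2$-block's $s$-symbols straddle several $w_1$-blocks with only partial advancement. The three error/main terms must be allocated carefully so that the $2n_2$ absorbs all partial-block effects, the $\ell_1 n_2/n_1$ absorbs all the "expensive" jumps, and the clean $\frac{\ell_1+\ell_2}{k+1}$ bound governs the well-aligned bulk; getting the constants to come out exactly as stated (rather than with a worse multiplicative constant) is the delicate part.
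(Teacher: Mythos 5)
Your plan has the right ingredients at the periphery (the $2n_2$ boundary term, and the observation that forcing the same letter across several blocks of $w_1$ costs a full cycle of length $kn_1$), but the central mechanism that produces the main term $\frac{\ell_1+\ell_2}{k+1}$ is missing: you assert that ``a counting of matched symbols by letter across the $k$ letters yields the $\frac{\ell_1+\ell_2}{k+1}$ term'' and defer it to an unspecified weighting/averaging argument, which is precisely the content of the lemma. The paper's route is more elementary. Decompose the common subsequence $w$ itself (not the blocks of $w_2$) into maximal runs $a_1^{p_1}\cdots a_s^{p_s}$ with $a_j\neq a_{j+1}$. Two facts then do all the work: (i) since consecutive runs use different letters, each new run must begin in a new block of $w_1$, so $s \le \frac{\ell_1}{n_1}+2$; (ii) a run $a_j^{p_j}$ must span at least $kn_i\left(\left\lceil \frac{p_j}{n_i}\right\rceil-1\right)\ge k(p_j-n_i)$ positions of $w_i$, so summing over runs in $w_2$ gives $\ell \le \frac{\ell_2}{k}+s\,n_2 \le \frac{\ell_2}{k}+\left(\frac{\ell_1}{n_1}+2\right)n_2$. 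The factor $\frac{1}{k+1}$ then appears not from any per-letter averaging but from taking the convex combination $\frac{k}{k+1}$ times this bound plus $\frac{1}{k+1}$ times the trivial bound $\ell\le\ell_1$. Your plan never identifies this combination with $\ell\le\ell_1$, and without it I do not see how your ``cheap blocks'' analysis would produce a $\frac{1}{k+1}$ fraction of $\ell_1+\ell_2$ rather than, say, a $\frac{1}{k}$ fraction of $\ell_2$ alone.

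A secondary issue: your charging scheme partitions by blocks of $w_2$ and tries to extract the $\ell_1\frac{n_2}{n_1}$ term from ``expensive'' blocks straddling several $w_1$-blocks. In the paper that term arises for a different reason---it is $s\cdot n_2$ with $s$ bounded by the number of $w_1$-blocks---and your version would additionally need to rule out double-counting of $w_1$-positions across distinct $w_2$-blocks, which you yourself flag as the delicate unresolved part. Also, the example you invoke, $0^i1^i(01)^i1^i0^i$, is a witness for the conjectures, not for this lemma; the relevant tightness example is $w_1=(0^{n_1}\cdots(k-1)^{n_1})^{\ell n_2}$ inside $w_2=(0^{n_2}\cdots(k-1)^{n_2})^{k\ell n_1}$ with $\ell_2=k\ell_1$. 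As written, the proposal is a plausible outline with the key quantitative step left as an assertion, so it does not yet constitute a proof.
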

\begin{proof}
	Let $w$ be a common subsequence of $w_1$ and $w_2$ of length $\ell$. 
	We see that~$w$ is of the form $a_1^{p_1} a_2^{p_2} \ldots a_s^{p_s}$, where all $a_j \in \{0,\ldots,k-1\}$, all $p_j$ are positive and $a_j \neq a_{j+1}$.
	We find directly that for $i = 1,2$: 
	
	\begin{align}\label{eq:boundS}
	s \leq \ceil{\frac{\ell_i -1}{n_i}} + 1 \leq \frac{\ell_i - 1 + n_i -1}{n_i} +1 \leq \frac{\ell_i}{n_i} +2.
	\end{align}
	
	We consider now the minimal length of a consecutive subword of $w_i$ that contains $a_j^{p_j}$, where $p_j > n_i$. Thus, $a_j^{p_j}$ cannot be contained in one block of the form $(0^{n_i}1^{n_i}\ldots(k-1)^{n_i})$.
	This shows that the minimal length of a consecutive subword of $w_i$ that contains $a_j^{p_j}$ is at least $kn_i$.
	
	This generalizes for $p_j > n_ir$ and we find that each $a_j^{p_j}$ spans a subsequence of length at least $k n_i (\ceil{\frac{p_j}{n_i}}-1) \geq k (p_j - n_i)$ in $w_i$. Thus, we find $\ell_i \geq \sum_{j=1}^{s} k (p_j - n_i)$. This gives in total
	\begin{align}\label{eq:boundL}
	\ell = \sum_{j=1}^{s} p_j \leq \frac{\ell_i}{k} + s n_i.
	\end{align}
	By combining \eqref{eq:boundS} and \eqref{eq:boundL} we find
	\begin{align}
	\ell \leq \frac{\ell_2}{k} + (\frac{\ell_1}{n_1} + 2) n_2.
	\end{align}
	Furthermore, we find directly that $\ell \leq \ell_1$.
	This gives in total
	\begin{align*}
	\frac{k}{k+1} \ell &\leq \frac{\ell_2}{k+1} + \ell_1 \frac{k n_2}{(k+1) n_1} + \frac{2k}{k+1} n_2 \leq \frac{\ell_2}{k+1} + \ell_1 \frac{n_2}{n_1} + 2 n_2\\
	\frac{1}{k+1} \ell &\leq \frac{\ell_1}{k+1},
	\end{align*}
	and by adding these inequalities, we find
	\begin{align*}
	\ell \leq \frac{\ell_1 + \ell_2}{k+1} + \ell_1 \frac{n_2}{n_1} + 2 n_2.
	\end{align*}
 \end{proof}

We think of $\frac{\ell_1 + \ell_2}{k+1}$ in the bound above as the ``main term''. Therefore, we need that $\frac{n_2}{n_1}$ is small.
The remaining term origins from boundary phenomena due to incomplete blocks. 
We note that this ``main term'' is indeed sharp for large $\ell_1, \ell_2$, when $\frac{n_1}{n_2}$ is integer and $k \ell_1 = \ell_2$ as the following example shows.
\begin{exampler}
	We consider $n_1 = p n_2$, with $p$ integer, and $w_1 = (0^{n_1} 1^{n_1}\ldots (k-1)^{n_1})^{\ell n_2}, w_2 = (0^{n_2} 1^{n_2}\ldots (k-1)^{n_2})^{k \ell n_1} = ((0^{n_2} 1^{n_2}\ldots (k-1)^{n_2})^{k p})^{\ell n_2}$. 
	One finds that $i^{n_1}$ is a subsequence of $(0^{n_2} 1^{n_2})^{p}$ and thus, $w_1$ is a subsequence of $w_2$.  
	This gives directly $|w_1| = k n_1 \ell n_2, |w_2| = k n_2 k \ell n_1 = k|w_1|$ and $|w| = |w_1| = \frac{|w_1| + |w_2|}{k+1}$.
\end{exampler}

For the following considerations we will need a generalization of the notion of antipalindromes to the case of non-binary alphabet. One natural version would be to say that $w$ is an antipalindrome if $w$ and $w^{R}$ differ at every position. However, we work with a stronger notion, which still provides an interesting bound. 

\begin{definitionr}
	We call a word $w \in \{0,\ldots,k-1\}^{*}$ a {\em strong antipalindrome} if $w = \overline{w}^{R}$.
\end{definitionr}

\begin{theorem}\label{thm-main-lb}
	For every $\varepsilon > 0$ there exists a circular word over the alphabet $\{0,\ldots,k-1\}$ with equal number of $0$'s, $1$'s, $\ldots$, $(k-1)$'s ($n$ occurences of each letter) such that any palindromic and any strongly antipalindromic subsequence of it is of length at most $(\frac{2}{k + 1} + \varepsilon)kn$.
\end{theorem}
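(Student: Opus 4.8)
The plan is to build the circular word by concatenating several blocks, each of which is a long word of type $n_i$ for a rapidly decreasing sequence $n_1 \gg n_2 \gg \cdots \gg n_m$, chosen so that Lemma~\ref{le:upperBound} forces any common subsequence of two different blocks to be short. Concretely, I would take $w = w_1 w_2 \cdots w_m$ (read cyclically), where each $w_i \in S_{n_i}$ has length roughly $L$ (the same order of magnitude for all $i$), the ratios $n_{i+1}/n_i$ are all at most some tiny $\delta$, and the whole word is balanced: each letter $0,\ldots,k-1$ occurs exactly $n$ times. Balancing is arranged by choosing each $w_i$ to consist of a whole number of full blocks $(0^{n_i}1^{n_i}\cdots(k-1)^{n_i})$ up to negligible boundary corrections, so that within each $w_i$ the $k$ letters already appear equally often; summing over $i$ keeps the global counts equal. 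Here $kn = |w|\approx mL$.

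\medskip

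Next I would analyse an arbitrary palindromic subsequence $s = s' s''$ with $s'' = (s')^R$, or a strongly antipalindromic one with $s = \overline{s}^R$. Split $s$ according to which block $w_i$ each of its letters comes from: write $s = t_1 t_2 \cdots t_m$ where $t_i$ is the part lying in $w_i$. Because the first and second halves of the palindrome must match (after reversal, and after applying $\overline{\phantom{x}}$ in the antipalindromic case), a letter of $s$ coming from $w_i$ is matched with a letter coming from $w_j$ for some $j$; this pairing, restricted to the block indices, is an order-reversing involution on positions, so it pairs block $i$ with some block $\sigma(i)$. The key combinatorial point: the multiset of letters that $s$ takes from $w_i$ must (as a word, read appropriately) be a common subsequence of $w_i$ and of $\overline{w_{\sigma(i)}}^R$ — and the latter is again a word of type $n_{\sigma(i)}$, since $S_{n}$ is closed under both $w \mapsto \overline w$ and $w \mapsto w^R$ as noted in the excerpt. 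Hence whenever $\sigma(i) \ne i$, Lemma~\ref{le:upperBound} applies with $n_1 = \max(n_i, n_{\sigma(i)})$, $n_2 = \min(n_i,n_{\sigma(i)})$, and bounds the contribution of those two blocks by $\frac{\ell_i + \ell_{\sigma(i)}}{k+1} + (\text{something} \le 2\delta L + 2\delta L)$, which is at most $\frac{2L}{k+1} + O(\delta L)$ per paired couple.

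\medskip

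The only blocks not covered by this argument are the (at most one, by the reversal symmetry of the pairing) block $i_0$ with $\sigma(i_0) = i_0$: its letters form a common subsequence of $w_{i_0}$ and $\overline{w_{i_0}}^R$, i.e. a self-palindromic (resp.\ strongly antipalindromic) subsequence of a single type-$n_{i_0}$ word of length about $L$. This can be as long as $L$ in the worst case, so it contributes at most one extra block's worth, $L$, to the total. Summing: there are $(m-1)/2$ genuinely paired couples plus at most one fixed block, giving $|s| \le \frac{m-1}{2}\cdot\frac{2L}{k+1} + L + O(m\delta L) \le \frac{m L}{k+1} + L + O(m\delta L)$. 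Since $mL = kn$ and $L = kn/m$, this is $\frac{kn}{k+1} + \frac{kn}{m} + O(\delta k n)$. Choosing $m$ large enough and $\delta$ small enough (with $\delta$ depending on $m$, then the $n_i$ chosen to realise the ratios) makes the error terms below $\frac{\varepsilon}{2}kn$, and comparing with the target $(\frac{2}{k+1}+\varepsilon)kn$ — note $\frac{1}{k+1} < \frac{2}{k+1}$, so there is even room to spare — finishes the proof.

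\medskip

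The main obstacle I anticipate is the bookkeeping around the pairing $\sigma$: one must argue cleanly that the part of the palindrome in $w_i$ really is a common subsequence of $w_i$ and a single type-$n_{\sigma(i)}$ word, rather than being spread across several partner blocks. This is where writing the palindrome as $s'(s')^R$ and tracking the "mirror position" of each symbol is essential — a symbol at position $p$ in $s$ is mirrored to position $|s|+1-p$, and the block containing that mirror position is well-defined, but a single block $w_i$ might a priori interact with two consecutive partner blocks near a boundary. Handling these boundary interactions costs only $O(n_{\sigma(i)}) = O(\delta L)$ per block (absorb into the error term), but making that rigorous, together with the cyclic indexing, is the fiddly part; everything else is a direct application of Lemma~\ref{le:upperBound} and a choice of parameters.
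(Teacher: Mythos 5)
Your construction and overall strategy are essentially the paper's: it takes $w=w_1\cdots w_r$ with $w_j=(0^{p^j}1^{p^j}\cdots(k-1)^{p^j})^{p^{r-j}}$, so all blocks have equal length $kp^r$, are individually balanced, and have type parameters in geometric progression, and it bounds each cross-block piece of the palindrome via Lemma~\ref{le:upperBound}. However, your final accounting contains a genuine factor-of-two error, and the ``room to spare'' you noticed is the symptom: you bound the \emph{joint} contribution $|t_i|+|t_{\sigma(i)}|$ of a paired couple by $\frac{\ell_i+\ell_{\sigma(i)}}{k+1}+O(\delta L)$, whereas Lemma~\ref{le:upperBound} bounds each of $|t_i|$ and $|t_{\sigma(i)}|$ \emph{separately} by that quantity, so the couple contributes up to $\frac{2(\ell_i+\ell_{\sigma(i)})}{k+1}=\frac{4L}{k+1}$. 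Summed correctly this gives $\frac{2mL}{k+1}=\frac{2kn}{k+1}$ plus error terms, i.e.\ exactly the theorem's bound. Your claimed $\frac{kn}{k+1}=\frac{|w|}{k+1}$ is impossible on its face: every word has a palindromic subsequence of length at least $\frac{|w|}{k}>\frac{|w|}{k+1}$ (all occurrences of the most frequent letter). Put differently, Lemma~\ref{le:upperBound} bounds the common subsequence $v$, which is only \emph{half} of the palindrome $vv^R$; the $2$ in $\frac{2}{k+1}$ is there for a reason.

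The second problem is that your pairing $\sigma$ is not well defined: the mirror images of the letters that $s$ takes from a single block $w_i$ occupy a contiguous arc in the complementary half of the circle, and that arc can be spread over arbitrarily many partner blocks (a few mirrored letters in each), not merely ``two consecutive partner blocks near a boundary''; so charging $O(\delta L)$ per block for these interactions is not the right accounting (also, up to \emph{two} blocks can meet both halves of the palindrome, one at each of the two arc boundaries, not one). The clean repair --- which is what the paper does --- is to observe that the two halves of the palindrome occupy disjoint arcs, discard the at most two blocks straddling the arc boundaries at a cost of $O(kp^r)$, and then decompose $v$, as a common subsequence of $w_{i_1}\cdots w_{i_a}$ and $w_{j_b}^R\cdots w_{j_1}^R$, into at most $a+b-1$ pieces via the common refinement of the two interval partitions. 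Every piece pairs sub-blocks of \emph{different} types, Lemma~\ref{le:upperBound} applies to each, and the main terms sum to $\frac{(a+b)kp^r}{k+1}\le\frac{|w|}{k+1}$ for $|v|$, hence $\frac{2|w|}{k+1}$ for $|vv^R|$. With these two repairs your argument coincides with the paper's.
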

\begin{proof}
	Let us consider a circular word with a linear representation $w_1 w_2 \ldots w_r = w$, where $w_j = (0^{p^j} 1^{p^j}\ldots (k-1)^{p^j})^{p^{r-j}}$. We see directly that $|w_j| = k p^r$ and, thus, $k n := |w| = k r p^r$.
	Furthermore, we have $w_j \in S_{p^j}$.
	
	We only work in the palindromic case from now on, but the same reasoning also holds in the case of strong antipalindromes.
	
	Let $v v^{R}$ be a palindromic subsequence of even length.
	Thus, we find that $v$ is a subsequence of the linear word $u_1' w_{i_1} w_{i_2} \ldots w_{i_a} u_2$ and $v^{R}$ is a subsequence of the linear word $u_2' w_{j_1} w_{j_2} \ldots w_{j_b} u_1$, where $u_1 u_1' = w_{i_0}, u_2 u_2' = w_{j_0}$ and $i_{k} \neq j_{\ell}$ for all $0\leq i \leq a, 0\leq \ell \leq b$.
	
	This shows that $v$ is a common subsequence of $u_1' w_{i_1} w_{i_2} \ldots w_{i_a} u_2$ and\\
	$u_1^{R} w_{j_b}^{R}  \ldots w_{j_1}^{R} u_2'^{R}$.
	By removing the parts of $v$ that belong to the boundary blocks $u_i$ we get $v$ that is a common subsequence of $w_{i_1} w_{i_2} \ldots w_{i_a}$ and $w_{j_b}^{R}  \ldots w_{j_1}^{R}$, where
	\begin{align*}
	|v| - |v'| \leq |w_{i_0}| + |w_{j_0}| = 2kp^r.
	\end{align*}
	
	From now on, we only work with $v'$. 
	We can rewrite $v'$ as a concatenation of at most $(a+b-1)$ blocks $v_i$, where each $v_i$ is a common subsequence of some $w^{(i)}_1 \in S_{p^{j_1(i)}}$ and $w^{(i)}_2 \in S_{p^{j_2(i)}}$ where $j_1(i) \neq j_2(i)$.
	Furthermore, we have $a+b \leq r$ and 
	\begin{align*}
	\sum_{i} |w^{(i)}_1| = akp^r\\
	\sum_{i} |w^{(i)}_2| = bkp^r.
	\end{align*}
	
	By using Lemma~\ref{le:upperBound} we find that
	\begin{align*}
	|v'| &= \sum_{i} |v_i|\\ 
	&\leq \sum_{i} \Big( \frac{|w^{(i)}_1| + |w^{(i)}_2|}{k+1} + \frac{(|w^{(i)}_1| + |w^{(i)}_2|)}{p} + 2 p^{r-1} \Big)\\
	&\leq \frac{|w|}{k+1} + \frac{|w|}{p} + \frac{2|w|}{k p}.
	\end{align*}
	
	This gives in total (together with the bound on $|v| - |v'|$)
	\begin{align*}
	|v v^{R}| \leq \frac{2|w|}{k+1} + |w| \rb{\frac{4}{p} + \frac{4}{r}}.
	\end{align*}
	Thus, choosing $p = r \geq \frac{8}{\varepsilon}$ finishes the proof.
 \end{proof}

The trivial lower bound is $\frac{1}{k}$.
For palindromes, this can be seen immediately. For strong antipalindromes the case for $k$ odd works very similarly: We see that $\overline{(k-1)/2} = (k-1)/2$ and the word $((k-1)/2)^{|w|/k}$ is a strongly antipalindromic subsequence of length $|w|/k$. The case $k$ is even slightly more complicated but can be dealt with in the same way as $k=2$.

Theorem \ref{thm-main-lb} deserves some remarks. First, it is interesting that the family of words constructed in the theorem provides the same bound for both palindromic and strongly antipalindromic subsequences. Second, it provides a generalization of the palindromic and strongly antipalindromic conjectures to the case of an alphabet of more than two letters. These conjectures also remain open. 

Finally, for any $\varepsilon>0$, we find that the bound $\frac{2n}{k+1 + \varepsilon}$ holds almost surely for large $n$ in the case when we choose every letter independently and uniformly in $\{0,\ldots,k-1\}$.

To see this, we fix a subsequence of length $\frac{n}{k+1 + \varepsilon}$ and call it $w_0$.
Then we try to find $w_0, \overline{w_0}, w_0^R$ or $\overline{w_0}^{R}$ as a subsequence of the remaining word $w_1$.
However, any letter in $w_1$ is chosen independently and uniformly.
Therefore, it takes on average $k$ letters until one finds one specific letter.
By the law of large numbers, the number of letters we have to read in a string of independent and uniformly chosen letters to find a specific subsequence of length $\ell$ is asymptotically normal distributed with mean $\ell k$ and variance $\alpha \ell$ for some $\alpha >0$. By the Chebyshev inequality, we find that $w_0$ (or any of the mentioned forms above) appears in $w_1$ almost surely for large $n$ as $|w_1| = (k+\varepsilon) |w_0|$.

\section{Linear Words}

The minimum length of the longest palindromic/antipalindromic subsequence in the class of all linear binary words with $n$ letters can be easily computed. However, for some restricted classes of words their behavior is more complicated. One of the simplest restrictions is to forbid some number of consecutive equal letters. The following proposition is then not hard to prove. It suggests some progress for Conjectures \ref{conj-weak-anti} and \ref{conj-weak-pal} for binary words without three consecutive equal letters.

\begin{propositionr}
	Every binary word of length $n$ without three consecutive equal letters has a palindromic subsequence of length at least $\frac{2}{3}(n - 2)$. The same is true for an antipalindromic subsequence.
\end{propositionr}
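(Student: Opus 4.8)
The plan is to exploit the "no three consecutive equal letters" restriction to show that the word is, up to a small loss, already close to a string of alternating blocks of length one or two, and to build the palindrome greedily from the outside in. First I would set $w = a_1 \ldots a_n$ and consider the maximal runs (maximal blocks of equal letters): by hypothesis each run has length $1$ or $2$. Write $w$ as a concatenation of runs $R_1 R_2 \ldots R_t$ with letters alternating between $0$ and $1$ along consecutive runs. The key combinatorial observation is that we will match runs from the two ends: take $R_1$ against $R_t$, $R_2$ against $R_{t-1}$, and so on. Because the letters strictly alternate, $R_i$ and $R_{t+1-i}$ carry the same letter precisely when $t$ is odd and $i \le (t-1)/2$ pairs up with a run of the correct parity — so I would instead observe that for a \emph{palindrome} we want $R_i$ and $R_{t+1-i}$ to have \emph{equal} letters. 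Since runs alternate, $R_i$ has letter depending on the parity of $i$, and $R_{t+1-i}$ on the parity of $t+1-i$; these agree for all $i$ exactly when $t$ is odd. So when $t$ is odd the outermost pairing already has matching letters everywhere, and when $t$ is even we simply discard one end run (losing at most $2$ letters) to reduce to the odd case.

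Next, with $t$ odd, I would build the palindrome: for each pair $(R_i, R_{t+1-i})$ with $1 \le i < (t+1)/2$, take $\min(|R_i|, |R_{t+1-i}|)$ letters from each side; for the central run $R_{(t+1)/2}$, take all of it (a block of equal letters is trivially a palindrome, and its letter can be anything). This yields a palindromic subsequence. To lower-bound its length, note that the only letters lost are (a) the at-most-$2$ letters discarded to fix parity, and (b) for each pair, the $\big||R_i| - |R_{t+1-i}|\big| \le 1$ surplus on the longer side. So the number of lost letters is at most $2 + (\text{number of pairs})$. The number of pairs is $(t-1)/2 \le (t-1)/2$, and since every run has length at most $2$ we have $t \ge n/2$ — but that goes the wrong way, so instead I would bound the number of pairs by the number of runs of length $1$ among the matched pairs, which is more delicate.

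The cleaner route, which I would actually carry out, is to count globally: if $n_1$ is the number of length-$1$ runs and $n_2$ the number of length-$2$ runs, then $n_1 + 2n_2 = n$ and $t = n_1 + n_2$, so $t = n - n_2 \le n$ and also the loss per pair is at most $1$, giving loss at most $2 + t/2$. That is still not obviously $\le n/3$. To get the $\frac{2}{3}(n-2)$ bound one must be smarter: pair runs so that a length-$2$ run is matched against a length-$2$ run whenever possible, so that the "surplus $1$" is only paid when a $2$ is forced to match a $1$. A global averaging argument then shows that the total surplus plus the central run plus the parity fix is at most roughly $n/3$, because each unit of surplus corresponds to a distinct length-$1$ run and these cannot be too numerous relative to the palindrome we are extracting. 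I expect \textbf{this matching/accounting step — proving that the pairs can be chosen so the total loss is at most $\tfrac{1}{3}n + O(1)$ — to be the main obstacle}; everything else (the parity reduction, the antipalindromic case via $a_i \neq a_{n-i+1}$ and the complementary pairing, and the reduction $|vv^R|$ even vs. odd) is routine. The antipalindromic case is handled by the same scheme with "equal letters" replaced by "different letters," which swaps the roles of the odd and even cases for $t$, and the identical accounting then gives the same bound.
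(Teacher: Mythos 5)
Your setup is exactly the paper's: decompose $w$ into maximal runs of length $1$ or $2$, pair run $R_i$ with $R_{t+1-i}$, fix the parity of $t$ by discarding one end run (at most $2$ letters), take $\min(|R_i|,|R_{t+1-i}|)$ letters from each run of a pair plus the whole central run. But you stop exactly at the decisive step: you explicitly flag the accounting of the total loss as ``the main obstacle'' and do not carry it out, so the proposal does not actually reach the bound $\frac{2}{3}(n-2)$. The missing observation is a simple local one, and it requires no re-pairing at all: for each pair $(R_i, R_{t+1-i})$, if the two runs have equal lengths you keep all their letters (loss $0$), and if they have unequal lengths then their lengths are $1$ and $2$, so the pair contributes exactly $3$ letters to $w$ and you lose exactly $1$ of them — a loss of exactly one third \emph{of that pair's letters}. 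The central run is kept entirely. Hence the kept subsequence has length at least $\tfrac{2}{3}$ of the total length of all runs involved, i.e.\ at least $\tfrac{2}{3}(n-2)$ after the parity fix. Your attempted global bound ``loss $\le 2 + t/2$'' fails because it charges a loss to every pair instead of only to the unbalanced ones, and your proposed repair — matching length-$2$ runs against length-$2$ runs ``whenever possible'' — is not available: a palindromic subsequence forces the symmetric pairing $R_i \leftrightarrow R_{t+1-i}$, since its first half must come from a prefix and its second half from the matching suffix in nested order, so you cannot permute which runs face which.

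Your parity analysis (palindrome wants $t$ odd, antipalindrome wants $t$ even, discard one end run otherwise) and the reduction of the antipalindromic case to the same accounting are correct and match the paper. With the per-pair one-third accounting inserted, the argument closes and is essentially the paper's proof.
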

\begin{proof}
	Let $w$ be a binary word without three consecutive equal letters. Consider the representation $w = w_1 w_2 \ldots w_m$ such that each $w_i$ is composed of only zeros or only ones, and two consecutive words $w_i$ and $w_{i + 1}$ consist of different letters. Then the length of each $w_i$ is at most $2$. Assume that $m$ is even (otherwise remove~$w_m$). Then one can pick at least one letter from each pair $w_i$, $w_{m - i + 1}$ (or two letters if both $w_i$, $w_{m - i + 1}$ are of the same length) and all the letters from~$w_{\frac{m + 1}{2}}$ in such a way that the resulting subsequence is a palindrome. This way we get a palindromic subsequence of length at least $\frac{2}{3}(n - 2)$. The same proof can be done for antipalindromic subsequences.
 \end{proof}

For the antipalindromic part, one can take the word $(001)^i$ to see tightness (we conjecture the bound $\frac{2}{3}n$ to be tight for words with equal number of zeros and ones, but we could not find an example providing tightness). For palindromic subsequences we conjecture a stronger bound.

\begin{conjecturer}
	Every binary word of length $n$ without three consecutive equal letters has a palindromic subsequence of length at least $\frac{3}{4}(n - 2)$.
\end{conjecturer}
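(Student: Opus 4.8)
The plan is to refine the block-decomposition argument used in the preceding proposition, now aiming to save at least one letter out of every four rather than one out of every three. Write $w = w_1 w_2 \ldots w_m$ with each $w_i$ a maximal run of equal letters, so $|w_i| \in \{1, 2\}$, and $w_i$, $w_{i+1}$ use different letters. Let $a$ be the number of indices $i$ with $|w_i| = 2$ and $b$ the number with $|w_i| = 1$, so $2a + b = n$ and $m = a + b$. The earlier construction pairs $w_i$ with $w_{m-i+1}$ and keeps one letter per pair unless the two runs have equal length; the loss is exactly one letter for every pair of runs of unequal length. To get $\tfrac{3}{4}$ we must show that we can choose \emph{which} linear representation / alignment to use — equivalently, exploit freedom in where we ``cut'' the word — so that the number of mismatched pairs is at most roughly $\tfrac{1}{4}$ of the letters.

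The key idea I would pursue: instead of the fixed central pairing, consider the cyclic-like family of pairings obtained by deleting a short prefix and suffix (or an internal block) before folding, and argue by an averaging argument that some alignment has few unequal-length pairs. Concretely, when we fold $w$ about a center, pair $i \leftrightarrow m-i+1$ contributes $\min(|w_i|,|w_{m-i+1}|)$ to the palindrome, and we lose $\bigl||w_i| - |w_{m-i+1}|\bigr| \le 1$. So the palindrome length is $n - (\text{number of mismatched pairs})$, up to the $O(1)$ from the center block and parity. A mismatched pair is one where exactly one of the two runs has length $2$. The first main step is therefore to reduce the problem to: \emph{show that for some choice of fold (among a suitable family), at most $\tfrac{n}{4}$ pairs are mismatched}, i.e., the number of length-$2$ runs that get matched to length-$1$ runs is at most $\tfrac{a}{2}$ on average (since each mismatched pair consumes one length-$2$ run and $2a \le n$ gives $a \le n/2$, hence $\tfrac{a}{2} \le \tfrac n4$ mismatched pairs would give length $\ge \tfrac{3}{4}n$, minus the $O(1)$ correction).

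To execute the averaging: consider the $m$ folds indexed by the position of the fold axis (between consecutive runs, or through a run). For a fixed length-$2$ run $w_i$, as the fold axis varies over all positions, $w_i$ is paired with each other run at most a bounded number of times; summing over all folds, the total count of (length-$2$ run paired with length-$1$ run) incidences is at most (number of length-$2$ runs)$\times$(number of length-$1$ runs)$\times O(1)/m$-ish, and dividing by $m$ folds yields an average of at most $\min(a,b) \le \tfrac{m}{2}$... here a cleaner counting is needed. A more promising route: pick the fold so that the two halves each contain the same number of length-$2$ runs (possible by a discrete intermediate-value argument, since moving the axis by one run changes this count by at most one). Then each length-$2$ run in the left half can be matched to a length-$2$ run in the right half by a monotone pairing that respects the fold order; the mismatched pairs are then confined to the length-$1$ runs, of which there are $b = n - 2a \le n$, and one keeps one letter from each such pair, losing at most $\tfrac{b}{2} \cdot \tfrac{1}{?}$ — the bookkeeping has to be arranged so the loss is $\tfrac14 n$.

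\textbf{The main obstacle} I anticipate is exactly this combinatorial heart: it is not true that an arbitrary fold yields few mismatches (the word $(001)^i$-type patterns are designed to frustrate a naive fold), so one genuinely needs to choose the alignment cleverly, and it is not obvious that a single fold ever achieves the $\tfrac14$ loss — the palindrome may need to ``skip'' runs asymmetrically rather than fold rigidly. I would therefore be prepared to replace rigid folding by a more flexible greedy/DP-style matching: scan from both ends, and whenever the two current runs have unequal length, advance the pointer on the side of the longer run while banking the surplus letter for later use against a future surplus on the other side, showing the deficit never exceeds $\tfrac14 n$. Proving that invariant — that surpluses on the two sides roughly balance because of the no-three-in-a-row constraint forcing runs of length $1$ and $2$ to interleave — is where the real work lies, and where a clean potential-function argument would need to be found. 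If such an invariant resists proof in full generality, a fallback is to prove the weaker clean statement (e.g. $\tfrac{3}{4}n - O(\sqrt n)$, or the bound under the extra hypothesis of equal letter counts) and leave the sharp constant as the stated conjecture, which is consistent with how the surrounding results in this section are phrased.
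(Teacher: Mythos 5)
This statement is stated in the paper as a conjecture, not a proposition: the authors only verify it by computer up to $n=30$ and exhibit $(001)^i(011)^i$ as the extremal family; there is no proof in the paper to compare against. Your proposal, as you yourself acknowledge at several points, does not close the gap either: the entire argument hinges on the claim that some fold (or some flexible two-pointer matching) loses at most $\tfrac{n}{4}$ letters, and that claim is never established. The counting you sketch does not suffice even in principle: with $a$ runs of length $2$ and $b$ runs of length $1$, the trivial bound on mismatched pairs is $\min(a,b)$, and when $a=b$ this equals $a$ while $\tfrac{n}{4}=\tfrac{3a}{4}$, so a genuinely nontrivial saving over the worst alignment is required and your averaging heuristic does not produce it. There is also a concrete flaw in the setup: in a binary word the runs alternate letters, so under the rigid pairing $w_i \leftrightarrow w_{m-i+1}$ the two runs in a pair consist of the \emph{same} letter for all $i$ only when $m$ is odd (and of \emph{opposite} letters for all $i$ when $m$ is even); your accounting ``palindrome length $= n - \#\text{mismatched pairs}$'' ignores letter identity entirely. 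On the paper's own extremal word $(001)^i(011)^i$ the run-length sequences of the two halves are $2,1,2,1,\ldots$ and $1,2,1,2,\ldots$, so after adjusting parity to make the letters match, essentially every pair is length-mismatched and the rigid fold loses about $\tfrac{n}{3}$, not $\tfrac{n}{4}$. This confirms your own suspicion that no single fold works and that the palindrome must skip runs asymmetrically; the invariant you would need for the greedy/banking variant is precisely the open combinatorial content of the conjecture.

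Your fallback position -- prove the $\tfrac{2}{3}(n-2)$ bound and leave the sharp $\tfrac{3}{4}$ constant as a conjecture -- is exactly what the paper does, so the honest conclusion is that your proposal reproduces the known partial result's method but does not prove the stated claim.
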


We checked this conjecture up to $n = 30$. The worst known cases are provided by the word $(001)^i (011)^i$, showing the tightness of the conjectured bound.

Note that every binary word without two consecutive equal letters is a sequence of alternating zeros and ones, and thus has a palindromic subsequence of length $n-1$, where $n$ is the length of the word. For a three-letter alphabet it is not hard to prove the following.

\begin{propositionr}
	Let $w$ be a word of length $n$ over a three-letter alphabet. If $w$ has no two consecutive equal letters, then it has a palindromic subsequence of length at least $\frac{1}{2}(n - 1)$.
\end{propositionr}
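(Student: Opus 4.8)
The plan is to exploit the structure of alternating-free words over a three-letter alphabet $\{0,1,2\}$ by greedily matching letters from the two ends of $w$ towards the middle. Write $w = a_1 a_2 \ldots a_n$. I would process the word by repeatedly comparing the current leftmost unused letter with the current rightmost unused letter. If the two letters agree, I keep both of them, contributing a matched pair to the palindromic subsequence, and move both pointers inward. If they disagree, I discard \emph{one} of the two endpoint letters (say the left one) and advance only that pointer. The key claim to establish is that we never need to discard two consecutive letters from the same end, which is where the ``no two consecutive equal letters'' hypothesis enters: after discarding $a_i$ because $a_i \neq a_j$, the next left letter $a_{i+1}$ is different from $a_i$ but could still differ from $a_j$; however, since the alphabet has only three letters, $\{a_i, a_{i+1}\}$ together with $a_j$ — if all three were distinct — would force $a_{i+1} \ne a_j$, so one has to be slightly more careful and count discards globally rather than locally.

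Concretely, I would argue as follows. Consider the ``scanning'' process above and let $d$ be the total number of discarded letters. Every discarded letter is removed because at that moment the left endpoint and right endpoint carried different symbols. I claim $d \le \frac{1}{2}(n-1) + O(1)$, equivalently that at least half (up to an additive constant) of the letters get matched into pairs or sit at the exact center. The cleanest way to see this: each time we discard a letter from, say, the left, the symbol at the left pointer strictly changes on the very next step (by the no-two-equal hypothesis), so we cannot discard the same symbol twice in a row from the left without an intervening change; pairing each discard with the letter that causes the pointer configuration to resolve, one shows the discards can be charged two-to-one against matched pairs. Summing, if $p$ is the number of matched pairs then $2p + d \ge n - 1$ (the $-1$ accounting for a possible odd middle letter) and $d \le 2p + O(1)$, hence $2p \ge \tfrac{1}{2}(n-1) - O(1)$ is too weak — so instead I would set up the charging so that $d \le p + O(1)$, giving $3p \ge n - O(1)$... but the target is $\tfrac12(n-1)$, i.e.\ a palindrome of length $2p \ge \tfrac12(n-1)$, which only needs $p \ge \tfrac14 n$, so even the weaker charging $d \le 3p$ suffices. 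This slack is reassuring and suggests the bound is not tight; nonetheless I would present the simplest argument that crosses the $\tfrac12(n-1)$ threshold.

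An alternative, possibly cleaner route: split $w$ at its midpoint into $w = u v$ with $|u| = \lfloor n/2 \rfloor$, $|v| = \lceil n/2 \rceil$, and look for a common subsequence of $u$ and $v^R$. Since $u$ has no two consecutive equal letters, it is a word over $\{0,1,2\}$ in which each symbol alternates; the longest common subsequence of two such ``square-free in the trivial sense'' words of length about $n/2$ is at least... this again reduces to a matching argument. I would actually prefer to phrase the whole proof as: \emph{build the palindrome symmetrically from outside in}, maintain the invariant that we have discarded at most as many letters as we have kept, and use the alphabet-size-three constraint to break the potential deadlock where left and right endpoints perpetually disagree — with three symbols, if $a_i \neq a_j$ then advancing either pointer by one changes that endpoint's symbol, and two such advances on the same side cycle through all three symbols, forcing a match within a bounded window.

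The main obstacle I anticipate is making the charging argument fully rigorous at the \emph{boundary between the two scanning regions}, i.e.\ controlling what happens as the two pointers approach each other and potentially cross while a discard is ``in progress''; this is exactly the source of the additive $-1$ (or small constant) in the statement, and getting the constant to be precisely $1$ rather than $2$ or $3$ requires a careful case analysis of the last few steps (the central one, two, or three letters). Everything away from that meeting point is a routine counting/pigeonhole argument on a three-symbol alphabet with no immediate repeats; the endgame near the center is the only place where one must be meticulous.
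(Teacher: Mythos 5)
There is a genuine gap: your proposal is a plan for a proof rather than a proof, and the one version of the algorithm you actually pin down fails. With the rule ``when the endpoints disagree, discard the left one,'' the word $(01)^k2$ (which has no two consecutive equal letters) is a counterexample: the right endpoint is $2$ and remains $2$ while you discard $0,1,0,1,\ldots$ all the way across, producing no matched pairs at all, even though this word contains a palindromic subsequence of length $n-2$. The same example refutes the claim that ``two such advances on the same side cycle through all three symbols, forcing a match within a bounded window'': a side may alternate between just two of the three letters indefinitely. So the discard rule must be chosen more cleverly, and the charging inequality --- which is the entire content of the argument --- is never established; your text drifts between $d\le 2p$, $d\le p$ and $d\le 3p$ without proving any of them. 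Note also an arithmetic slip: since $2p+d=n$ up to the central letter, the target $2p\ge\frac12(n-1)$ requires $d\le 2p$; the ``weaker charging $d\le 3p$'' only yields $p\ge\frac{n-1}{5}$ and does \emph{not} suffice.

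For contrast, the paper's proof needs none of this machinery and is essentially your ``alternative route'' carried to completion. Drop one letter if necessary so that the length is even, and cut $w$ into consecutive blocks $w_1w_2\cdots w_m$ of length $2$. By hypothesis each block consists of two \emph{distinct} letters, i.e.\ determines a $2$-element subset of the $3$-letter alphabet; any two such subsets intersect, so the blocks $w_i$ and $w_{m-i+1}$ share a letter $c_i$. Selecting that shared letter from each of the two paired blocks (and one letter from the central block if $m$ is odd) produces a palindromic subsequence of length $m\ge\frac12(n-1)$. No scanning, no charging, and no delicate endgame near the center: the only use of the three-letter hypothesis is the pigeonhole fact that two $2$-subsets of a $3$-set meet.
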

\begin{proof}
	Assume that the number of letters in $w$ is even (otherwise, remove the last letter). Let $w = w_1 w_2 \ldots w_m$ where $w_i$ is a word of length $2$. Each such word contains two different letters. Then for each pair $w_i$, $w_{m - i + 1}$ there exists a letter present in both words. By taking such a letter from every pair, we get a palindrome of length $m = \frac{1}{2}(n - 1)$.
 \end{proof}

Based on these observations and computer experiments, we formulate the following conjecture.

\begin{conjecturer}
	Let $w$ be a word of length $n$ over an alphabet of size $k$, $k \ge 2$. If $w$ has no two consecutive equal letters, then it has a palindromic subsequence of length at least $\frac{1}{k - 1}(n - 1)$.
\end{conjecturer}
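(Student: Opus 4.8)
The plan is to generalize the argument used for the three-letter case, pairing up blocks of length $k-1$ from the two ends of the word. First I would reduce to the case that $n$ is a multiple of $2(k-1)$ by deleting at most $2(k-1)$ letters from the end; this only costs an additive constant, which is absorbed by the $(n-1)$ in the statement (one should keep careful track here, as the exact constant subtracted matters for tightness). Write $w = w_1 w_2 \ldots w_m$ with each $w_i$ a block of $k-1$ consecutive letters, so $m = \frac{n}{k-1}$, and assume $m$ is even. The key observation is that since $w$ has no two consecutive equal letters, each block $w_i$ of length $k-1$ — in fact any window of $k-1$ consecutive letters — \emph{omits at least one letter of the alphabet}, equivalently, any two blocks $w_i, w_j$ together use at most... no: the right statement is that a single block of $k-1$ letters can use at most $k-1$ distinct letters, hence misses at least one. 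The crucial pairing claim is: for each $i$, the blocks $w_i$ and $w_{m-i+1}$, having $k-1$ letters each, must \emph{share a common letter}, because if they used disjoint letter sets that would require $2(k-1) > k$ letters for $k \ge 2$. Picking one such shared letter from each pair $(w_i, w_{m-i+1})$ for $i = 1, \ldots, \frac{m}{2}$ yields a palindromic subsequence of length $2 \cdot \frac{m}{2} = m = \frac{n}{k-1}$, which (after accounting for the rounding) gives at least $\frac{1}{k-1}(n-1)$.

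Let me reconsider whether blocks need to be length exactly $k-1$. With block length $k-1$ and the no-two-consecutive constraint, two blocks use at least $k-1 + 1 = k$... actually two arbitrary length-$(k-1)$ words could in principle be disjoint only if $2(k-1) \le k$, i.e. $k \le 2$. So for $k \ge 3$ disjointness is impossible and the pairing works immediately; for $k = 2$ the statement reduces to the already-noted fact that an alternating binary word of length $n$ has a palindromic subsequence of length $n-1$, which I would dispatch separately as a base case. Thus the choice of block length $k-1$ is exactly what forces every antipodal pair of blocks to intersect.

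The main obstacle I anticipate is not the combinatorial core — which is the pigeonhole argument above — but rather the bookkeeping to get the clean bound $\frac{1}{k-1}(n-1)$ rather than something like $\frac{n}{k-1} - O(k)$. Deleting up to $2(k-1)$ letters to make $n$ divisible by $2(k-1)$ and then taking $m = \frac{n}{k-1}$ pairs loses a multiplicative-looking amount; one must instead argue more carefully, e.g. by choosing the truncation adaptively or by handling a possibly-odd leftover middle block $w_{(m+1)/2}$ from which all $k-1$ letters can be kept (as in the three-letter proof, where the middle block contributes in full). I would verify that the worst case is an alternating word realizing equality, confirming the "$-1$" is the right additive correction, and check small cases $k=2,3$ against the already-established propositions to make sure the constants line up.
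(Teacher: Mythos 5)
There is a genuine gap, and it is worth noting first that this statement is an open \emph{conjecture} in the paper: the authors only verify it by computer for small $n$ and exhibit a critical example; no proof is given, so a correct argument here would settle an open problem. Your central pigeonhole step fails, however. You claim that the blocks $w_i$ and $w_{m-i+1}$, each consisting of $k-1$ consecutive letters, must share a common letter because disjoint letter sets would require $2(k-1) > k$ letters. This counts positions, not distinct letters: a window of $k-1$ consecutive letters with no two adjacent letters equal can use as few as two distinct letters (e.g.\ $aba$ for $k=4$), so two such blocks can have disjoint letter sets as soon as $k \ge 4$ (e.g.\ $aba$ and $cdc$ over $\{a,b,c,d\}$). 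The three-letter case works only because a block of length $2$ is forced to contain two distinct letters out of three; that is precisely the feature that does not generalize.

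This is not a repairable technicality in the bookkeeping: the paper's own critical example for this conjecture is a concatenation of long alternating stretches $(a_1a_2)^i$, $(a_3a_2)^{i-1}a_3$, \ldots, $(a_ka_{k-1})^{i-1}a_k$, each of which uses only two letters, so in that word the antipodal blocks near the two ends are letter-disjoint and your pairing yields nothing at all. (The conjectured bound is nevertheless attained there by a subsequence such as $a_\ell^{2i-1}$ drawn from two \emph{adjacent} stretches that share a letter --- a quite different mechanism from antipodal pairing.) Any proof will have to handle words whose two halves use nearly disjoint sub-alphabets, which is exactly the regime your argument cannot reach.
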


We checked this conjecture up to $n = 21$ for $k = 4$ and $n = 18$ for $k = 5$ by computer. A critical example for this conjecture is provided by a word which is a concatenation of the word $(a_1 a_2)^i$ and words $(a_{\ell + 1} a_\ell)^{i - 1} a_{\ell + 1}$ for $1 < \ell < k - 1$. This word shows that the conjectured bound is tight.

\section{Further Work}

There are some questions besides the conjectures above that are worth mentioning. First, there is no known reduction between the palindromic and antipalindromic conjectures. Thus, it is interesting to know whether a bound for one of them implies some bound for the other one. Second, no non-trivial relation is known for the bounds for the same conjecture but different size of alphabets. 

\section*{Acknowledgements}

We thank anonymous reviewers for their comments on the presentation of the paper. The second author is also grateful to Andr{\'a}s Seb\H{o}, Michel Rigo and Dominique Perrin for many useful discussions during the course of the work.

% ---- Bibliography ----
%
% BibTeX users should specify bibliography style 'splncs04'.
% References will then be sorted and formatted in the correct style.
%
 \bibliographystyle{splncs04}
 \bibliography{words}

\end{document}